\newcommand{\skeena}{{Skeena}\xspace}
\newcommand{\ermia}{\texttt{ERMIA}\xspace}
\newcommand{\ermias}{\texttt{ERMIA-S}\xspace}
\newcommand{\innodb}{\texttt{InnoDB}\xspace}
\newcommand{\innodbs}{\texttt{InnoDB-S}\xspace}
\newcommand{\innodbms}{\texttt{InnoDB-MS}\xspace}
\newcommand{\innodbm}{\texttt{InnoDB-M}\xspace}
\newcommand{\tmpfs}{\texttt{tmpfs}\xspace}
\definecolor{comment-red}{rgb}{0,0,0}
\definecolor{edit-red}{rgb}{0,0,0}
\definecolor{OliveGreen}{rgb}{0,0.6,0}
\newcommand{\edit}[1]{\textnormal{#1}\unskip}
\def\thepaperkeywords{Cross-engine transactions; snapshot isolation; OLTP; MySQL}
\def\thepapertitle{\skeena: Efficient and Consistent Cross-Engine Transactions}
\keywords{\thepaperkeywords}
\newcommand*\circled[1]{\tikz[baseline=(char.base)]{
            \node[shape=circle,fill=.,inner sep=0pt] (char) {\color{-.}\textsf\footnotesize #1};}}
\newcommand{\FAA}{\texttt{FAA}\xspace}
\definecolor{Blue}{rgb}{0, 0, 0}
\definecolor{comment-red}{rgb}{0,0,0}
\definecolor{OliveGreen}{rgb}{0,0.6,0}
\definecolor{BrickRed}{rgb}{0, 0, 0}
\newcommand{\xdep}{\rightarrow}
\newcommand{\txdepimpl}[3]{\xrightarrow{\mathit{#1{:}#2#3}}}
\newcommand{\txdep}[3][]{%
  \ifthenelse{\equal{#1}{}}
  {\txdepimpl{#2}{}{#3}}
  {\txdepimpl{#2}{#1{:}}{#3}}
}
\begin{document}
\fancyhead{}


\title{\thepapertitle}
\subtitle{To appear at SIGMOD 2022}

\lstset {
language=C,
basicstyle=\ttfamily\small,
keywordstyle=\ttfamily\bfseries\color{blue},
commentstyle=\color{OliveGreen},
numbers=left,
numberstyle=\small,
numbersep=5pt,
tabsize=1,
gobble=0,
stepnumber=2,
xleftmargin=15pt,
escapeinside={(@*}{*@)},
morekeywords={},
columns=fullflexible,
}

\author{Jianqiu Zhang}
\email{jianqiuz@sfu.ca}
\affiliation{%
  \institution{Simon Fraser University}
  \country{}
}

\author{Kaisong Huang}
\email{kha85@sfu.ca}
\affiliation{%
  \institution{Simon Fraser University}
  \country{}
}

\author{Tianzheng Wang}
\email{tzwang@sfu.ca}
\affiliation{%
  \institution{Simon Fraser University}
  \country{}
}

\author{King Lv}
\email{lvjinquan@huawei.com}
\affiliation{
  \institution{Huawei Cloud Database Innovation Lab}
  \country{}
}

%

\begin{abstract}
Database systems are becoming increasingly multi-engine. 
In particular, a main-memory database engine may coexist with a traditional storage-centric engine in a system to support various applications.
It is desirable to allow applications to access data in both engines using {\it cross-engine} transactions. 
But existing systems are either only designed for single-engine accesses, or impose many restrictions by limiting cross-engine transactions to certain isolation levels and table operations. 
The result is inadequate cross-engine support in terms of correctness, performance and programmability. 

This paper describes \skeena, a holistic approach to cross-engine transactions. 
We propose a lightweight snapshot tracking structure and an atomic commit protocol to efficiently ensure correctness and support various isolation levels. 
Evaluation results show that \skeena maintains high performance for single-engine transactions and enables cross-engine transactions which can improve throughput by up to 30$\times$ by judiciously placing tables in different engines. 
\end{abstract}

\maketitle
\pagestyle{empty}

\renewcommand\thetable{\arabic{table}}    
\renewcommand\thefigure{\arabic{figure}}    
\setcounter{figure}{0}   
\setcounter{table}{0}   
\setcounter{section}{0}
\setcounter{page}{1}

\section{Introduction}
\label{sec:intro}
Traditional database engines are storage-centric: they assume data is storage-resident and optimize for storage accesses. 
Modern database servers often feature large DRAM that fits the working set or entire databases, 
enabling memory-optimized database engines~\cite{MMDBOverview,Cicada,Silo,FOEDUS,ERMIA,Deuteronomy,Hekaton,HStore,Hyper,MOT} that perform drastically better with lightweight concurrency control, indexing and durability designs. 

Now suppose you are a database systems architect, and inspired by recent advances, built a new memory-optimized engine. 
But soon you found it was difficult to attract users: 
some do not need such fast speed; some say ``I want it only for some tables or part of my application.''
A common solution is to integrate the new engine into an existing system that initially uses a traditional engine. 
The result is a \textit{multi-engine} database system (Figure~\ref{fig:meds}). 
The application can judiciously use tables in both engines. 
Although engines share certain services (e.g., SQL parser), each engine is autonomous with its own indexes, concurrency control, etc. 
Some systems~\cite{HekatonWhitepaper,MySQLAltEngines,PGFDW}
already take this approach for easier migration and compatibility. 

\begin{figure}[t]
	\centering
	\includegraphics[width=\columnwidth]{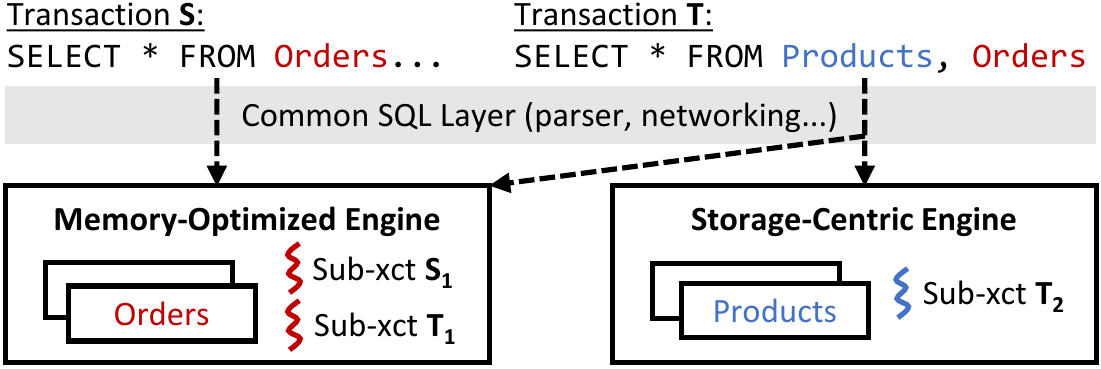}
	\caption{\textmd{Multi-engine database system. 
  Data accesses are routed to the corresponding storage engines.}}
	\label{fig:meds}
\end{figure}

\subsection{Cross-Engine: a Poorly-Supported Necessity}
\label{subsec:intro-cross}
As an experienced architect---perhaps even before users did---you realized it was necessary to support cross-engine transactions. 
For example, a financial application may use a memory table for fast trading and keep other data in the traditional engine for low cost; yet the user may need to access both engines for recent and historical trading data in one ACID transaction~\cite{SQLServerCET}. 
The application may use a unified SQL interface to access all engines, 
but since each engine is implemented as an autonomous ``package,'' the system has to use each engine's own transaction abstractions; we refer to them as \textit{sub-transactions}. 
A \textit{transaction} consists of at least one sub-transaction. 
In Figure~\ref{fig:meds}, $S$ is single-engine with $S_1$, while $T$ is cross-engine with $T_1$ (memory-optimized) and $T_2$ (storage-centric). 

Cross-engine transactions can be very useful, but existing support is inadequate in terms of correctness, performance and programmability. 
First, although simply starting and committing sub-transactions suffice to support single-engine transactions, doing so does not ensure correct cross-engine execution. 
A transaction over two engines that both use snapshot isolation (SI)~\cite{ANSICritique} can still see inconsistent data and run under a lower-than-SI isolation level. 
Even if both engines ensure serializability, the overall execution is not necessarily serializable. 
Simply committing sub-transactions also risks atomicity if a sub-transaction fails to commit. 

Second, prior designs are not aimed at achieving high performance in modern multi-engine systems, which are \textit{fast-slow} where a (much faster) memory-optimized engine and a (much slower) storage-centric engine coexist in a single node. 
So it is vital for the cross-engine solution to impose low (if any) overhead, especially on the faster engine. 
Prior solutions~\cite{MYRIAD,FederatedSI,GeneralizedSI,FDBSSurvey,MDBSOverview,IncrementalDSI,CO} ignored this hidden requirement by assuming a cluster of similar systems. 

Finally, past solutions are often at odds with (1) keeping engine autonomy for maintainability as engines are typically developed by different teams (but still of the same vendor), and (2) easing application development. 
They often require non-trivial application changes and limit functionality, by forcing users to pre-declare whether a transaction is cross-engine or to use certain isolation levels~\cite{HekatonWhitepaper}; both can be complex and affect performance. 

\subsection{\skeena}
We present \skeena, a holistic approach to efficient and consistent cross-engine transactions in the context of multi-versioned, fast-slow systems. 
We make three key observations to guide \skeena's design. 
First, as noted by prior work~\cite{IncrementalDSI}, inconsistent snapshots can be avoided by carefully selecting a snapshot in each engine. 
This requires efficiently tracking snapshots that can be safely used by later transactions. 
Second, in addition to using correct snapshots and enforcing sub-transactions commit in the same order across engines, for serializability it suffices to require each engine use commit ordering, i.e., forbid schedules where commit and dependency orders mismatch~\cite{CO,WeakConsistency}. 
Many concurrency control protocols exhibit this property, including the widely-used 2PL and optimistic concurrency control (OCC)~\cite{Silo,Hekaton,OCC,Cicada,HyperSI}. 
Finally, engines are developed and/or well understood by the same vendor, potentially allowing non-intrusive changes to engines for more optimizations. 

Based on these observations, we design \skeena to consist of (1) a cross-engine snapshot registry (CSR) for correct and efficient snapshot selection and (2) an extended pipelined commit protocol for atomicity and durability without (expensive) traditional 2PC. 
\skeena can be easily plugged into an existing system. 

Conceptually, CSR maintains mappings between commit timestamps (therefore snapshots) in one engine and those in another. 
A transaction may start by accessing any engine using the latest snapshot $s$. 
Upon accessing another engine $E$, it queries CSR using $s$ to select a snapshot in $E$ using which would avoid incorrect executions. 
Further, with CSR one only needs to set each engine to use a serializable protocol that exhibits commit ordering to guarantee serializability. 
Later, we discuss the detailed algorithms to realize this idea and techniques that make CSR lightweight and easy to maintain. 
In fast-slow systems, CSR incurs negligible overhead as the storage accesses in the traditional engine present a bigger bottleneck, and single-engine transactions do not access CSR at all. 

Leveraging the fact that engines can communicate via fast shared memory (e.g., in the same address space), \skeena extends the widely-used group/pipelined commit protocols~\cite{Aether,NVM-DLog,TaurusLog} to ensure atomicity and durability.  
Upon commit, the worker thread detaches the transaction and places it on a commit queue, before continuing to work on the next request. 
Meanwhile, a background committer thread monitors the queue and durable log sequence numbers in both engines to dequeue transactions whose sub-transactions' log records have been fully persisted. 
This way, \skeena ensures cross-engine transactions are not committed (i.e., with results made visible to the application) until all of its sub-transactions are committed, while avoiding expensive 2PC.

We adopted \skeena in MySQL to enable cross-engine transactions across its default InnoDB and ERMIA~\cite{ERMIA}, an open-source main-memory OLTP engine. 
This required 83 LoC out of over 200k LoC of the entire codebase. 
Evaluation 
on a 40-core server shows that \skeena retains the memory-optimized engine's high performance, and incurs very low additional overhead for cross-engine transactions. 
By judiciously placing tables in both engines, \skeena can help improve the throughput of realistic workloads by up to 30$\times$ compared to using traditional engines. 

Note that our goal is \textit{not} to build faster database engines, nor to invent new concurrency control protocols for cross-engine transactions; both are well studied by prior work.
Instead, we aim to (1) enable cross-engine transactions without excessive overhead and (2) explore practical designs for modern fast-slow systems. 

\subsection{Contributions} 
This paper makes five contributions. 
\circled{1} We analyze the correctness requirements of cross-engine transactions under various isolation levels, ranging from read committed to serializable. 
\circled{2} We distill a set of desirable properties and design principles to be followed by multi-engine systems.
\circled{3} We propose \skeena, a holistic approach to consistent cross-engine transactions by leveraging the properties of the fast-slow multi-engine architecture.
\circled{4} We show \skeena's feasibility and explore practical design issues by integrating an open-source memory-optimized engine (ERMIA) into MySQL alongside its storage-centric engine (InnoDB).
\circled{5} Through comprehensive experiments, we explore the potential and distill useful recommendations of using cross-engine transactions to improve performance and reduce storage costs under realistic workloads. 
\skeena is open-sourced at \url{https://github.com/sfu-dis/skeena}.

\section{Background} 
\label{sec:bg}
In this section, we give the necessary background for cross-engine transactions and motivate our work.

\subsection{Modern Fast-Slow Multi-Engine Systems}
\label{subsec:fast-slow}
We have described the idea of multi-engine systems in Section~\ref{sec:intro}.
Several production systems already adopted the fast-slow architecture: 
SQL Server supports memory-optimized tables managed by its Hekaton main-memory engine~\cite{Hekaton,SQLServerMemTable}; 
PostgreSQL supports additional engines through foreign data wrapper~\cite{PGFDW}, which is used by Huawei GaussDB to integrate a main-memory engine~\cite{MOT}. 

Multi-engine systems bear similarities to distributed and federated database systems~\cite{FDBSSurvey,MDBSOverview,SuperDB,MYRIAD,MYRIADDesign,IncrementalDSI,GeneralizedSI,1SRSI,LazySI}, but are unique in several ways. 
As Table~\ref{tbl:meds} summarizes, a multi-engine system integrates engines developed and/or understood by the same vendor;
in contrast, federated systems consist of opaque systems developed by different vendors. 
Distributed systems typically involve a set of nodes that run the same engine carefully designed to support distributed transactions, exhibiting low autonomy. 
Fast-slow systems integrate different engines that vary in performance, so an inefficient cross-engine solution may penalize single-engine transactions, defeating the purpose of adopting a fast engine; 
mitigating such overhead is the major goal of our work. 
Note that multi-engine systems often allow slightly trading autonomy for performance and compatibility, e.g., by managing schemas in all engines centrally. 
However, federated systems allow little room for doing so, as each system is usually a proprietary package. 
Multi-engine systems can scale up and out, whereas the other two types of systems mainly focus on scaling out.
We focus on single-node fast-slow systems and leave scaling out as future work.
Finally, both multi-engine and federated systems may present applications with a unified interface (e.g., SQL). 
But this does not automatically guarantee correctness for cross-engine transactions. 
Unlike federated systems which already address this issue~\cite{FDBSSurvey,MDBSOverview,MYRIAD}, modern fast-slow systems either completely lack the support for cross-engine transactions (e.g., MySQL) or come with many restrictions (e.g., SQL Server); 
we elaborate in Section~\ref{subsec:motivation} after introducing more necessary background next. 

\subsection{Database Model and Assumptions}
\label{subsec:model}

Now we lay out the preliminaries for analyzing cross-engine transactions in fast-slow systems. 

\begin{table}[t]
\caption{\textmd{Multi-engine vs. distributed and federated systems.}}
\label{tbl:meds}
\begin{tabular}{@{}p{2.07cm}p{1.91cm}p{1.75cm}p{1.7cm}@{}}
\toprule
\textbf{} & \textbf{Multi-Engine}  & \textbf{Federated} & \textbf{Distributed} \\ \midrule
{Engine Internals} & Transparent & \textit{Opaque} & Transparent \\ \midrule
{Engine Types} & Heterogeneous & Heterogeneous & \textit{Homogeneous} \\ \midrule
{Autonomy} & \textit{Almost full} & Full & Low \\ \midrule
{Scalability} & \textit{Up and/or out} & Out & Out \\ \bottomrule
\end{tabular}
\end{table}

\textbf{Multi-Versioning.} 
Many popular systems are multi-versioned, including storage-centric (e.g., MySQL InnoDB, PostgreSQL and SQL Server) and memory-optimized (e.g., Hekaton~\cite{Hekaton}, ERMIA~\cite{ERMIA} and Cicada~\cite{Cicada}) engines.
Given the wide adoption, we focus on multi-versioned systems.
Following prior work~\cite{SSI,IncrementalDSI,SSN,WeakConsistency,GeneralizedIsolationLevels}, 
we model databases as collections of records, each of which is a totally-ordered sequence of \textit{versions}. 
Updating a record appends a new version to the record's sequence. 
Inserts and deletes are special cases of updates that append a valid and special ``invalid'' version, respectively.
Obsolete versions (as a result of deletes/updates) are physically removed only after no transaction will need them, using reference counting or epoch-based memory management~\cite{ERMIA,Steam}.

Reading a record requires locating a proper version; this is dictated by the concurrency control protocol. 
We base on a common design~\cite{Hekaton,ERMIA,Deuteronomy,MVCCEval} where the engine maintains a global, monotonically increasing counter that can be atomically read and incremented. 
Note that in a multi-engine system, engines maintain their own timestamp counters; 
for now, we assume single-engine transactions and expand to cross-engine cases later. 
Each transaction is associated with a begin timestamp and a commit timestamp, both drawn from the counter. 
Upon commit, the transaction obtains its commit timestamp, which determines its commit order by atomically incrementing the counter. 
Each version is associated with the commit timestamp of the transaction that created it. 
Transactions access data using a \textit{snapshot} (aka \textit{read view}), which is a timestamp that represents the database's state at some point in (logical) time. 

\textbf{Isolation Levels.}
For read committed, we always read the latest committed version. 
SI allows the transaction to read the latest version created before its begin timestamp obtained upon transaction start or the first data access. 
A transaction can update a record if it can see the latest committed version. 
Serializability can be achieved by locking~\cite{DistributedROTx,IntegratedCC,DistTxImpl} and certifiers~\cite{SSI,SSN,OCC}.
We aim to enforce these isolation levels in the presence of cross-engine transactions. 

\textbf{Cross-Engine ACID Properties.}
Analogous to maintaining ACID properties in a single engine, a multi-engine system must maintain these for both single- and cross-engine transactions:

\begin{itemize}[leftmargin=*]\setlength\itemsep{0em}
\item Atomicity: All the sub-transactions should eventually reach the same commit or abort conclusion, i.e., either all or none of the sub-transactions commit in their corresponding engines. 
\item Consistency: All transactions (single- or cross-engine) should transform the database from one consistent state to another, enforcing constraints within and across engines. 
\item Isolation: Changes in any engine made by a cross-engine transaction must not be visible until the cross-engine transaction commits, i.e., all sub-transactions have committed.
\item Durability: Changes made by cross-engine transactions should be persisted while guaranteeing atomicity.
\end{itemize}
Enforcing cross-engine ACID requires careful coordination of sub-transactions to avoid anomalies, as we describe next.

\subsection{Cross-Engine Anomalies}
\label{subsec:anomalies}
The relative ordering of sub-transaction begin/commit events directly determines correctness, as certain ordering may lead to anomalies and violate ACID requirements, as described next. 

\begin{figure}[t]
\centering
\includegraphics[width=0.95\columnwidth]{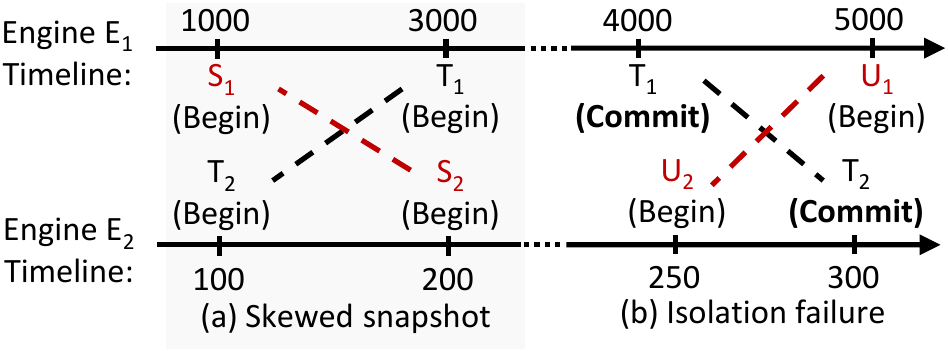}
\caption{\textmd{Inconsistent snapshots. (a) $S$ uses an older (newer) snapshot in $E_1$ ($E_2$). 
(b) $U$ sees $T_1$'s results, but does not see $T_2$'s.}}
\label{fig:anomalies}
\end{figure}

\textbf{Issue 1: Inconsistent Snapshots.}  
There are two cases where a transaction may be given an inconsistent snapshot. 
In Figure~\ref{fig:anomalies}(a), $S$ started in $E_1$ with snapshot 1000, and $T$ started in $E_2$ with snapshot 100.
Suppose another transaction in $E_1$ committed by incrementing $E_1$'s timestamp counter to 3000. 
Then, $T$ accesses $E_1$, which assigns $T_1$ its latest snapshot 3000, and $S_2$ obtains snapshot 200 in $E_2$.
Compared to $S$, $T$ sees a newer version of the database in $E_1$, but an older version in $E_2$.
This would require $S$ and $T$ to start before each other, which is impossible in a correct SI schedule~\cite{WeakConsistency}. 
This corresponds to the ``cross'' phenomenon in distributed SI (DSI)~\cite{IncrementalDSI}. 

Another anomaly (isolation failure) may allow transactions to see partial results. 
In Figure~\ref{fig:anomalies}(b), $T$ first commits $T_1$ with timestamp 4000. 
Until $T_2$ is fully committed, $T$ is still in-progress, so none of its changes should be visible to other transactions. 
Meanwhile, $U$ starts in $E_2$ with timestamp 250, and continues to open $U_1$: since $U_1$ started after $T_1$ committed, by definition it should see $T1$'s changes. 
Thus, $U$ sees partial results: $T$'s results are visible in $U$'s snapshot in $E_1$, but not $E_2$.
This anomaly corresponds to the serial-concurrent phenomenon in DSI~\cite{IncrementalDSI}. 
Compared to inconsistent (skewed) snapshots which concern the order in which sub-transactions are opened, isolation failure arises when sub-transaction begin and commit actions are interleaved and inflict different write-read dependency orders in different engines. 

\textbf{Issue 2: Serializability.}
Even if both engines guarantee full serializability, the overall execution may not be serializable. 
As Figure~\ref{fig:serial-dep}(a) shows, $S$ and $T$ are concurrently executing in two engines that offer serializability. 
Each engine runs a serializable schedule, with an anti-dependency shown in Figure~\ref{fig:serial-dep}(b). 
However, as shown in Figure~\ref{fig:serial-dep}(c), the overall execution exhibits write skew with cyclic dependencies ($T\xdep S\xdep T$), indicating non-serializable execution. 

\textbf{Issue 3: Atomicity and Durability.}
A cross-engine transaction should commit either all or none of its sub-transactions. 
Distributed systems usually solve this problem with 2PC, but newer engines may not support it~\cite{EndSlowNetworks,NAMDB}. 
2PC's coordination overhead can also be heavyweight for shared memory, slowing down the (faster) main-memory engine. 
As we describe later, additional checks are needed in addition to a traditional 2PC prepare-commit protocol. 
Thus, 2PC may not be the best choice for single-node multi-engine systems.

\subsection{State-of-the-Art and Motivation}
\label{subsec:motivation}
Prior work can avoid the anomalies~\cite{CO,FDBSSurvey,MDBSOverview,SuperDB,MYRIAD,MYRIADDesign,IncrementalDSI,GeneralizedSI,1SRSI,LazySI}, but they targeted distributed and federated systems without considering single-node fast-slow systems. 
For example, certain solutions for DSI~\cite{IncrementalDSI} guarantee correct snapshots with a central coordinator node and global IDs, requiring non-trivial engine changes. 
2PC as we have described may also not suit fast-slow systems.

\begin{figure}[t]
\centering
\includegraphics[width=\columnwidth]{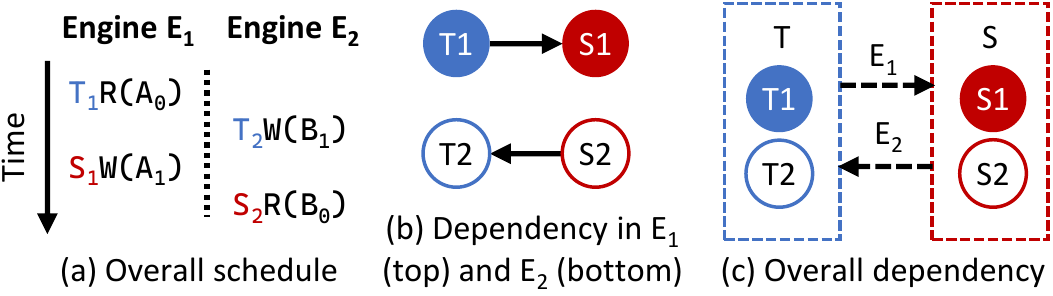}
\caption{\textmd{Non-serializable execution of cross-engine transactions. 
(a) Each engine executes a serializable schedule (b) without cyclic dependencies. 
(c) Overall cyclic dependency between $T$ and $S$.}}
\label{fig:serial-dep}
\end{figure}

The mismatch between prior approaches and fast-slow systems has led to missing or limited cross-engine support in real systems, motivating our work. 
For example, MySQL supports various engines 
under a unified interface. 
Users are free to issue multi-engine transactions, but correctness is undefined as MySQL does not coordinate snapshot and commit ordering {\it across} engines. 
All the anomalies in Section~\ref{subsec:anomalies} could occur. 
Compared to MySQL, SQL Server supports cross-engine transactions with various restrictions~\cite{HekatonWhitepaper}. 
For example, 
if both the traditional engine and Hekaton use SI, cross-engine transactions are not allowed at all, yet SI is among the most popular isolation levels in Hekaton~\cite{HekatonWhitepaper}. 
These significantly limit the potential of cross-engine transactions.

\section{Design Principles}
\label{sec:principles}
We distill a set of desired properties and design principles that a cross-engine mechanism like \skeena should follow:
\begin{itemize}[leftmargin=*]\setlength\itemsep{0em}
\item \textbf{Low Overhead.} 
The mechanism should introduce as low overhead as possible. 
It should try not to penalize single-engine transactions, especially those in the faster engine.  

\item \textbf{Engine Autonomy.} 
Engines should be kept as-is, or only be minimally modified to work with the cross-engine mechanism or optimize for performance. 

\item \textbf{Full Functionality.} The mechanism should support various isolation levels for both single- and cross-engine transactions, unless it is limited by individual engine capabilities. 

\item \textbf{Transparent Adoption.} 
The application should not be required to make logic changes. 
Rather, it should only need to declare the ``home'' engine of each table in the schema. 

\end{itemize}

\section{\skeena Design}
\label{sec:skeena}
\skeena targets fast-slow systems with a main-memory and a storage-centric engine.
We first give an overview of \skeena, and then discuss its design in detail. 

\begin{figure}[t]
\centering
\includegraphics[width=\columnwidth]{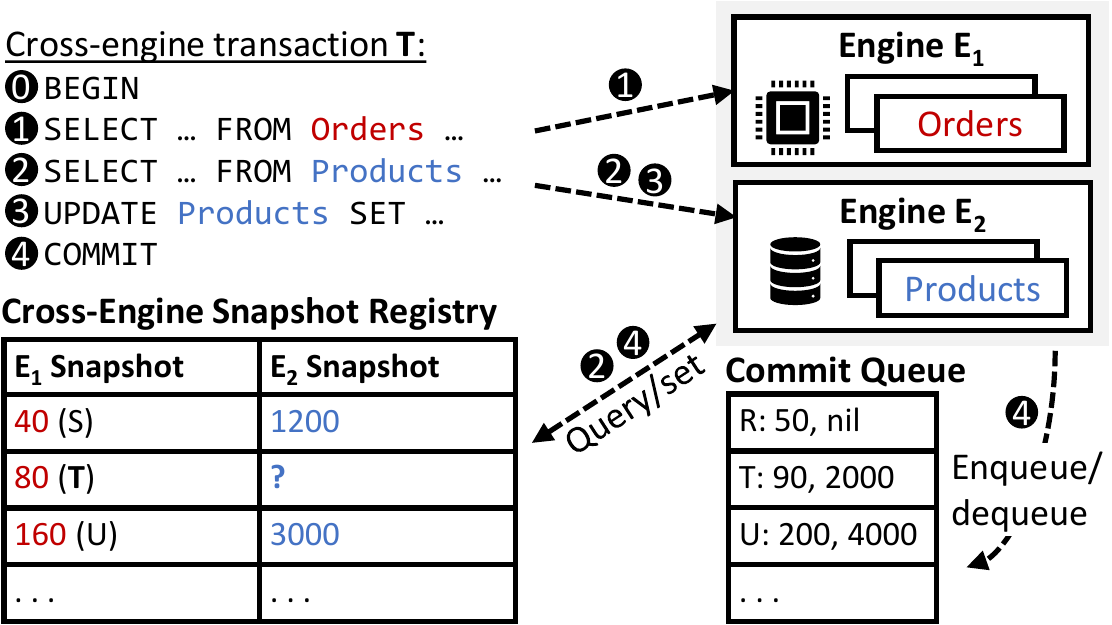}
\caption{\textmd{\skeena overview. 
\protect\circled{0}--\protect\circled{1} Transactions access data without explicitly declaring whether they are cross-engine. 
\protect\circled{2} Upon accessing an additional engine, the transaction \protect\circled{3} consults CSR to obtain a proper snapshot. 
\protect\circled{4} Cross-engine transactions use CSR for commit check and if passed, goes through the pipelined commit protocol.
}}
\label{fig:arch}
\end{figure}

\subsection{Overview}
\skeena ensures correct snapshot selection and atomic commit. 
As Figure~\ref{fig:arch} shows, \skeena consists of 
(1) the cross-engine snapshot registry (CSR) that tracks valid snapshots and 
(2) a pipelined commit protocol for atomically committing cross-engine transactions. 
Now we describe the high-level transaction workflow under \skeena. 

\textbf{Initialization.} 
Transactions (single- or cross-engine) can keep using the database system's unified APIs (e.g., SQL), without 
providing additional hints (e.g., whether a transaction will be cross-engine). 
Figure~\ref{fig:arch} shows an example SQL program, which is written in the same way as without \skeena. 
\skeena does not force transactions to run under specific isolation levels. 
However, the system may allow users to specify an isolation level (e.g., \texttt{SET TRANSACTION ISOLATION LEVEL} in MySQL~\cite{MySQL-SetTX}). 
As part of the integration effort, \skeena can detect and enforce such settings across all engines. 

\textbf{Data Accesses.}
The system routes requests to the target engine which uses a sub-transaction to access data. 
\skeena requires no change to the existing routing mechanism. 
Upon start or accessing the first record, the sub-transaction obtains a snapshot. 
Depending on whether the transaction is single- or cross-engine, the system may directly give the latest snapshot in the underlying engine, or use CSR to obtain a snapshot that would not cause anomalies (steps \circled{2}--\circled{4} \texttt{Query/set}).
If such a snapshot does not exist, the transaction will be aborted; we quantify the impact in Section~\ref{sec:eval}. 

\textbf{Finalization.} 
To commit, a cross-engine transaction consults CSR to verify that committing it would not lead to inconsistent snapshots for future transactions; single-engine transactions commit without using \skeena. 
Transactions that have passed CSR verification are marked as pre-committed to go through the pipelined commit protocol 
(step \circled{4} \texttt{Dequeue}). 
If the verification fails, we abort the transaction by rolling back all of its sub-transactions. 

In the rest of this section, we describe how \skeena facilitates snapshot selection, atomicity and durability, beginning with CSR. 

\begin{algorithm}[t]
\begin{lstlisting}[language=python, mathescape, gobble=0, keywordstyle=\ttfamily\bfseries\color{blue},]
def select_snapshot(e1_snap, engine &e2):
  # Find existing snapshots that could be used
  candidates[] = CSR.forward_scan_1st(e1_snap)
  if candidates is empty:
    # No existing mapping, obtain the latest from e2
    e2_snap = e2.timestamp_counter
  else:
    # Use the latest snapshot mapped to s <= e1_snap
    e2_snap = max(candidates)
  CSR.map(e1_snap, e2_snap)
  return e2_snap 
\end{lstlisting}

\caption{Snapshot selection for cross-engine transactions.}
\label{alg:get-readview}
\end{algorithm}

\subsection{Cross-Engine Snapshot Registry}
\label{subsec:basic-csr}
The key to avoiding inconsistent snapshots is to ensure the sub-transactions of different cross-engine transactions follow the same start order in each engine~\cite{IncrementalDSI}. 
That is, if $T$'s sub-transaction $T_1$ uses an older snapshot than $S_1$ does in engine $E_1$, then $T_2$ should also use an older snapshot compared to that of $S_2$ in $E_2$.
For example, in Figure~\ref{fig:arch}, $T$ first started as a single-engine transaction accessing \texttt{Orders} in $E_1$, using snapshot 80. 
When $T$ starts to access \texttt{Products} in $E_2$, $T$ needs to use a snapshot ($s$) in $E_2$ such that $s$ is between the snapshots of its ``neighbors'' in $E_1$, i.e., $S_1$ and $T_1$.
Thus, $T$ may use any valid $E_2$ snapshot between 1200 and 3000 (inclusive), although using 3000 would allow it to see fresher data. 

To facilitate such a snapshot selection process, CSR tracks valid snapshots (i.e., commit timestamps of past cross-engine transactions) that can be safely used by future cross-engine transactions. 
Conceptually, CSR is a table of many-to-many mappings, where each ``row'' (CSR entry) is a pair of snapshots (i.e., commit timestamps), one from each engine as depicted by Figure~\ref{fig:arch}. 
When a transaction crosses to an additional engine, it uses the current engine's snapshot as the key to query CSR for a snapshot in the other engine. 
As Algorithm~\ref{alg:get-readview} shows, to access a new engine \texttt{e2}, the worker thread issues a non-inclusive forward scan over CSR using the snapshot in the current engine \texttt{e1} as the key (\texttt{e1\_snap}) to obtain a set of candidate snapshots. 
The scan returns once a first key greater than \texttt{e1\_snap} is met or no such key is found. 
If the scan returns an empty set, then no past transaction has set up any mapping, or the current transaction is using the latest \texttt{e1} snapshot. 
Then we use the latest \texttt{e2} snapshot (lines 4--6). 
However, if any candidate is found, we must take an \texttt{e2} snapshot that is already mapped to \texttt{e1\_snap} to avoid anomalies (lines 7--9). 
Finally, we ensure the mapping is recorded at line 10. 
Under SI, the algorithm is executed only once per transaction when it becomes cross-engine.
Subsequent accesses continue to use the previously acquired snapshots.

\begin{algorithm}[t]
	\begin{lstlisting}[language=python, gobble=0, mathescape, keywordstyle=\ttfamily\bfseries\color{blue},]
def cross_engine_commit_check(sub_t1&, sub_t2&): 
  # Obtain lower and upper bounds for sub-transaction t2 
  low = -inf
  candidates[] = CSR.reverse_scan_1st(sub_t1.commit_ts)
  if candidates[] is not empty:
    low = max(candidates)

  high = +inf
  candidates[] = CSR.forward_scan_1st(sub_t1.commit_ts)
  if candidates[] is not empty:
    high = min(candidates)

  # Check if committing t2 would cause future anomalies
  if low > sub_t2.commit_ts or high < sub_t2.commit_ts:
    return false
  else:
    # Check passed, setup mapping and return
    CSR.map(sub_t1.commit_ts, sub_t2.commit_ts)
    return true
\end{lstlisting}

	\caption{Commit check for cross-engine transactions.}
	\label{alg:commit-check}
\end{algorithm}

In addition to acquiring snapshots, committing a cross-engine transaction implicitly limits the ranges of snapshots a (future) cross-engine transaction may use: recall that the commit timestamp of a previous transaction $T$ in fact is the snapshot of a future transaction that reads the results generated by $T$.
Thus, CSR also tracks commit timestamps of cross-engine transactions. 
Algorithm~\ref{alg:commit-check} describes the process at a high level. 
Here, we assume the sub-transaction commit timestamps are given (as the \texttt{commit\_ts} member in each sub-transaction); we revisit this assumption later in more detail. 
The idea is to ensure that committing a cross-engine transaction---i.e., adding a new mapping entry to CSR---would not add skewed snapshots to CSR. 
Thus, upon commit, we issue a reverse scan and a forward scan over CSR using a sub-transaction's (\texttt{sub\_t1}) commit timestamp to obtain the lower and higher bounds for the other commit timestamp (lines 4--11 in Algorithm~\ref{alg:commit-check}). 
If \texttt{sub\_t2}'s commit timestamp falls between the higher and lower bounds, we can safely commit this cross-engine transaction and setup a new mapping in CSR (line 18).
Otherwise, the transaction must be aborted. 
Note that the mapping step in Algorithm~\ref{alg:get-readview} is still necessary: (1) single-engine commits are not covered by CSR to avoid unnecessary overheads, and (2) a cross-engine transaction may access data generated by single-engine transactions and form new cross-engine snapshots. 

Since a transaction may access engines in any order (e.g., from the storage-centric engine and crosses over to the memory-optimized engine, and vice versa), CSR needs to support queries from \textit{either} engine. 
CSR may be implemented using a relational table in one of the supported engines with full-table scan or two range indexes, each of which is built on a ``column'' of the CSR table. 
However, this can create dependency on a particular engine and incur much table and index maintenance overhead. 
A practical design must also support concurrency. 
We address these issues next. 

\subsection{Lightweight Multi-Index CSR}
\label{subsec:multi-index}

We take advantage of the unique properties of fast-slow systems to devise a lightweight CSR that mitigates the above issues. 

\textbf{Anchor Engine.}
Compared to storage-centric engines, it is typically much cheaper to obtain snapshots in main-memory engines. 
This is often as simple as manipulating an 8-byte counter in a lock-free manner without using a mutex~\cite{Abyss}. 
For example, ERMIA~\cite{ERMIA} only needs to read the counter; 
Hekaton~\cite{Hekaton} increments the counter using atomic fetch-and-add (\FAA)~\cite{IntelManual} to keep the process efficient.\footnote{Such designs are common in multi-versioned engines~\cite{MVCCEval,Abyss,Bohm,Cicada}. 
When the engine is integrated into a full system, e.g., MySQL , \FAA's overhead is negligible.} 
On the contrary, obtaining a snapshot in a storage-centric system can be much more complex. 
For example, MySQL InnoDB needs to take multiple mutexes to compute watermark values (see Section~\ref{sec:mysql}). 

Leveraging the existence of a fast and a slow engine, \skeena designates an \textit{anchor} engine and always follows the snapshot order in the anchor engine. 
The anchor engine should be the one where it is cheaper to acquire a snapshot (usually the memory-optimized one). 
Then a transaction always starts by acquiring the latest snapshot from the anchor engine, and uses it to query CSR when it extends to the other engine. 
This allows us to maintain one-to-many mappings (instead of many-to-many mappings), which simplifies CSR to become a range index that uses the anchor engine's snapshots as ``keys'' and lists of snapshots in the other engine as ``values.'' 
We currently use Masstree~\cite{Masstree}, a high-performance in-memory index, but any concurrent data structure that supports range queries would suffice.
A side effect is transactions that only access the ``slower'' engine become cross-engine and need to use Algorithms~\ref{alg:get-readview}--\ref{alg:commit-check}. 
As Section~\ref{sec:eval} shows, the overhead is negligible compared to data accesses which may involve the storage stack while CSR is fully in-memory. 

Using the main-memory engine as the anchor is an optimization, not a requirement: 
in theory any engine can be the anchor. 
In case a heavyweight engine has to be the anchor, cross-engine transactions may incur higher overhead for creating snapshots (thus lower overall performance). 
This will in turn reduce the pressure on CSR which is less frequently accessed and maintains fewer snapshots. 

\textbf{Multi-Index.}
Since CSR tracks cross-engine snapshots and commit histories, its size can grow quickly, slowing down query speed over time; 
entries that are no longer needed should also be cleaned up. 
We solve these problems by partitioning the CSR by snapshot ranges, reminiscent of multi-rooted B-trees~\cite{PLP}. 
The result is a multi-index design shown in Figure~\ref{fig:multi-index}.
Each partition is an index and covers a unique range of snapshots so that a transaction only uses a single index. 
In Figure~\ref{fig:multi-index}, the first two indexes cover mappings in the ranges of [30, 400] and [401, 500], respectively. 
Each partition has a fixed capacity (number of keys), and a new index is created when the current open index is full. 
Therefore, there is always one and only one open index that can accept new mappings; 
other indexes are read-only but can continue to serve existing transactions for snapshot selection. 
However, since inactive indexes are read-only, if a transaction needs to setup a new mapping in an inactive index during snapshot selection or commit check, it must be aborted; in practice, such aborts are rare as we evaluate in Section~\ref{sec:eval}. 

\begin{figure}[t]
\centering
\includegraphics[width=0.95\columnwidth]{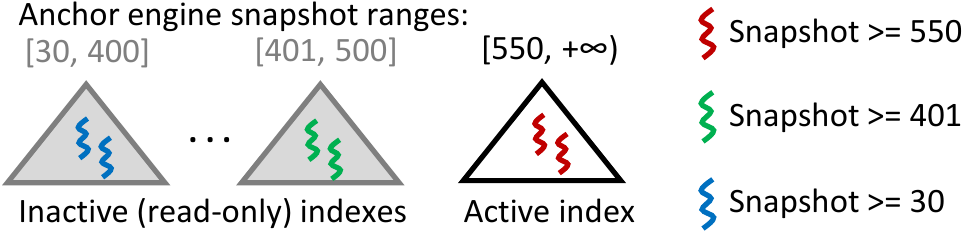}
\caption{\textmd{Multi-index CSR. 
Each index covers a range of anchor snapshots, and is recycled in its entirety when no longer needed.}}
\label{fig:multi-index}
\end{figure}

\textbf{Snapshot Acquisition and Commit Check.}
With multiple indexes and an anchor engine, a transaction acquires snapshots by 
(1) obtaining a snapshot $S$ from the anchor engine, 
(2) locating the index $I$ that covers $S$, and 
(3) using $S$ to query $I$ and if needed, create a new entry in $I$ following Algorithm~\ref{alg:get-readview} with \texttt{e1\_snap = $S$} and \texttt{CSR} at line 3 being $I$. 
Note that steps 2 and 3 are only executed if the transaction accesses the non-anchor engine. 
For example, if the main-memory engine is the anchor and the transaction only accesses an in-memory table, steps 2 and 3 are never executed. 
For step 2, we track all the indexes in a list/array (e.g., C++ \texttt{std::vector}). 
Each entry records the minimum snapshot of the partition and a pointer to the index. 
Since we keep only one open index, entries in the list are sorted by snapshot ranges. 
We search for $I$ by traversing the list backwards and stopping at the first entry whose smallest snapshot is smaller than or equal to the given snapshot. 
In step 3, a new entry is inserted if and only if $I$ is open; 
otherwise the transaction is aborted. 
Commit check follows a similar logic and can proceed only if $I$ is open. 
Likewise, \texttt{CSR} in lines 4--9 of Algorithm~\ref{alg:commit-check} refers to $I$.
Multiple threads may execute the above three steps concurrently, for which we describe our solution next.

\subsection{CSR Concurrency and Maintenance}
\label{subsec:csr-concurrency}
Now we discuss how \skeena handles concurrent accesses and manages/recycles indexes in multi-index CSR.

\textbf{Concurrency.}
Although latches can be a potential bottleneck in multicore systems~\cite{Shore-MT}, a latch-based solution in \skeena can be efficient by leveraging the fast-slow property: 
compared to executing transactions in the slower engine, using latches and high-performance indexes present negligible overhead and little impact on overall performance for cross-engine transactions.
Each index is protected by a mutex, and we protect the array of all indexes using a reader-writer lock for mutual exclusion between threads that only query an index without modifying the list using the reader mode and those that may add or remove an index using the writer mode. 
A transaction starts by latching the list of indexes in shared (reader) mode to locate the target index $I$. 
Then the thread latches $I$ for exclusive access to run Algorithm~\ref{alg:get-readview}. 
If the thread needs to create a new index, e.g., if $I$ is full or the list is empty, it (1) releases the list latch, (2) requires the latch in writer mode to allow inserting to the list, and (3) checks if such an index has been inserted by another thread between steps 1 and 2, and if so, retries the entire process after releasing the list latch; 
otherwise we proceed by (4) appending the new index with a new mapping to the list. 
Commit check follows the same logic so we do not repeat the details.

\textbf{Index Maintenance.}
Using multiple indexes simplifies garbage collection (GC) as we can delete an entire index once its mappings are no longer needed, instead of issuing many key delete operations to an index. 
To recycle, we first iterate over all the active transactions to find the oldest anchor-engine snapshot (\texttt{min\_snap}). 
Then we exclusively latch the list of indexes and scan through it to remove stale indexes that cover ranges below \texttt{min\_snap}.
In case long-running transactions prevent \texttt{min\_snap} from growing, one may further find opportunities to remove unused indexes covering newer but still unused ones, reminiscent of GCing long version chains in multi-versioned systems~\cite{Steam}. 
For example, assume a transaction still uses the left-most index in Figure~\ref{fig:multi-index}, yet no transactions need the middle index (401--500), which can then be first recycled; 
the oldest index is recycled later when it is no longer needed. 
Our experiments do not indicate this approach to be necessary and recycling is fast as it is purely in-memory. 
Recycling is triggered between CSR accesses based on a user-defined threshold (e.g., once per 5000 accesses); 
it could also be delegated to background threads.

\subsection{Commit Protocol}
\label{subsec:commit}
Once all accesses are finished, \skeena checks whether both sub-transactions can commit using engine-level commit timestamps that represent the sub-transactions' commit ordering. 
Thus, \skeena needs to obtain the sub-transaction's commit timestamp from each engine. 
This is usually easy for memory-optimized engines which break the commit process into pre- and post-commit~\cite{ERMIA,HekatonCC,Hekaton}. 
During pre-commit, the engine assigns a commit timestamp and uses it to determine whether the transaction can commit without violating correctness criteria (e.g., serializability). 
If so, post-commit will finish the commit by marking new records as visible, finalizing log records, etc.
Otherwise the transaction is aborted. 
Some (mainly storage-centric) engines may not explicitly expose such pre- and post-commit interfaces, 
but engines in a multi-engine system are maintained by the same vendor. 
This justifies simple changes in engines to expose the pre- and post-commit interfaces, which is straightforward in practice by breaking a monolithic ``commit'' function into a pre- and post-commit function (Section~\ref{sec:mysql}). 
Single-engine transactions directly execute the two steps without commit check. 

With the pre- and post-commit interfaces, \skeena commits a cross-engine transaction in three steps. 
(1) Pre-commit both sub-transactions to obtain commit timestamps. 
(2) Use the timestamp obtained from the anchor engine to conduct the commit check. 
(3) If the check passes, post-commit both sub-transactions.
From a high-level, \skeena's commit protocol resembles 2PC:
step 1 may correspond to 2PC's prepare phase that collects commit decisions from each engine;
step 3 may correspond to 2PC's commit phase. 
However, \skeena differs from 2PC by requiring an additional check (step 2) even after all the engines have pre-committed the transaction. 
So an ``all-yes'' result from the 2PC-equivalent prepare phase does not necessarily mean a cross-engine transaction can commit.

Before both sub-transactions are post-committed, changes by either should be kept invisible, yet 
from the perspective of an engine, a post-committed (sub-)transaction is fully committed with its results visible. 
\skeena must ensure partial results are not visible until all sub-transactions are post-committed.
We observe that a simple yet effective solution is to extend the pipelined commit protocol~\cite{Aether} which was initially proposed to hide log flush latency. 
It decouples transactions waiting for log flushes and worker threads to keep I/O off the critical path. 
Upon commit, instead of directly issuing a log flush, the thread detaches the transaction and appends it to a global commit queue (or a partitioned queue to avoid introducing a central bottleneck). 
Results by these transactions are immediately visible internally but are not returned to applications until their log records have been persisted. 
A daemon tracks transactions awaiting log durability on the commit queue, and dequeues transactions whose log records have been persisted. 
Some systems use this approach to improve throughput without sacrificing correctness~\cite{Aether,QueryFresh,NVM-DLog,Shore-MT}. 

Based on this idea, \skeena (1) pushes both sub-transactions onto the commit queue upon post-commit and (2) has the commit daemon monitor both engines' log flushes to dequeue transactions. 
If an engine already implements commit pipelining, \skeena can directly extend it. 
Note that single-engine and read-only transactions must also use commit pipelining~\cite{CLV} as they may read cross-engine transactions' results; we quantify its impact on latency in Section~\ref{sec:eval}.

\subsection{Durability and Recovery}
In a multi-engine system, each engine implements its own approach to durability and crash recovery.
Sub-transactions still follow their corresponding engines' approach to persist data and log records. 
Checkpoints can be taken as usual independently by each engine. 
To ensure atomicity of cross-engine transactions, \skeena can record the pre- and post-commit of cross-engine transactions, by maintaining a standalone log or piggybacking on individual engines. 
The latter can be easier to implement: upon pre-commit we append a \texttt{commit-begin} record, and after post-commit finishes, the engine appends a \texttt{commit-end} record. 
During recovery, each engine executes its recovery mechanism and rolls back changes done by cross-engine transactions whose sub-transactions are not fully committed. 
Alternatively, the recovery procedure may inspect each engine's log and truncate at the first ``hole'' where only one sub-transaction of a cross-engine transaction is committed. 
This is safe because transactions that depend on partially committed cross-engine transactions will wait on the commit queue and their results were never made visible to applications. 

\subsection{Serializability}
\label{subsec:serial}
As noted by prior work~\cite{CO}, disallowing anti-dependencies (i.e., using commit order as dependency order) in all engines is sufficient for cross-engine serializability. 
This translates into choosing a concurrency control protocol for each engine where a sub-transaction can only commit if its read records are not concurrently modified by a newer transaction. 
A wide range of engines~\cite{Hekaton,HekatonCC,Bohm,Silo,FOEDUS,MOT,Cicada,Deuteronomy} exhibit this property based on 2PL (by blocking readers and writers) and OCC (by verification at commit time). 
Some protocols~\cite{SSN,SGT,MakingSSI,SSI} can tolerate certain safe anti-dependencies, but would require implementing verification in \skeena. 
This needs engines to expose dependency information, tightly coupling \skeena with engine design and sacrificing engine autonomy. 
Thus, we take the former approach that imposes no engine-level changes. 

\subsection{Correctness}
\label{subsec:proof}
\skeena's theoretical foundation comes from DSI~\cite{IncrementalDSI} and commit ordering (CO)~\cite{CO} which respectively ensure consistent snapshots and serializability. 
Different from DSI and CO, \skeena targets single-node instead of distributed systems. 
In essence, \skeena implements DSI and CO for shared memory. 
Both DSI and CO enforce sufficient conditions with their correctness formally proved. 
Thus, we argue for \skeena's correctness by showing \skeena enforces the same conditions as DSI and CO. 
We first lay out the necessary notations used by DSI~\cite{IncrementalDSI} and our adaptation for shared memory:

\begin{itemize}[leftmargin=*]\setlength\itemsep{0em}
\item $c_x$: Commit of transaction $x$; 
\item $c_x^i$: Commit of transaction $x$ on node/engine $i$; 
\item $b_x$: Begin of transaction $x$; 
\item $b_x^i$: Begin of transaction $x$ on node/engine $i$; 
\item $SN(x, y)$: All nodes/engines accessed by $x$ and $y$;
\item $x < y$: Transaction $x$ is serialized before $y$; 
\item $op_1 < op_2$: Begin/commit operation $op_1$ is ordered before $op_2$. 
For begin operations, $op_1 \le op_2$ is allowed as begin timestamps can be acquired by reading a counter. 
\end{itemize}
Then DSI defines the following sufficient conditions for correctness: 

\begin{theorem}\label{the:dsi}
Suppose each node $i$ enforces correct local SI. 
If all the local schedules satisfy the following rules, we can construct a correct DSI schedule, i.e., the execution is correct DSI~\cite{IncrementalDSI}: 

$\exists c_x^i < c_y^i \xdep c_x < c_y$ \hfill (Rule 1)  

$\exists b_x^i < c_y^i \xdep b_x < c_y$ \hfill (Rule 2) 

$\exists c_x^i < b_y^i \xdep c_x < b_y$ \hfill (Rule 3) 

$\exists b_x^i \le b_y^i \xdep b_x \le b_y$ \hfill (Rule 4) 

$c_x < c_y \xdep \forall j \in SN(x, y): c_x^j < c_y^j$ \hfill (Rule 5) 

$b_x < c_y \xdep \forall j \in SN(x, y): b_x^j < c_y^j$ \hfill (Rule 6) 

$c_x < b_y \xdep \forall j \in SN(x, y): c_x^j < b_y^j$ \hfill (Rule 7) 

$b_x < b_y \xdep \forall j \in SN(x, y): b_x^j < b_y^j$ \hfill (Rule 8) 
\end{theorem}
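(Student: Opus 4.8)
The plan is to exhibit a single global schedule $H$ built by merging the per-node schedules, and then verify that $H$ meets the definition of a correct DSI execution: every committed transaction reads from a globally consistent snapshot, and write--write conflicts are resolved by one common order. The backbone of $H$ is a strict total order $<$ on commit events (hence on committed transactions), namely any linear extension of the relation ``$c_x < c_y$ whenever some shared node $i$ commits $x$ before $y$.'' Rules 1 and 5 are precisely what make this well defined: Rule 1 lifts a witnessing local order to the global one, Rule 5 pushes the global order back down to every shared node, and together they forbid disagreement among nodes --- $c_x^i < c_y^i$ at node $i$ together with $c_y^j < c_x^j$ at node $j$ would yield both $c_x < c_y$ and $c_y < c_x$. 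Rules 1 and 5 also make the relation acyclic, so the linear extension exists. I would first discharge the routine checks that $<$ is irreflexive, antisymmetric and transitive, giving $H$ a consistent commit timeline whose restriction to each node equals that node's commit order.

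The heart of the argument is showing that the snapshot observed by a transaction is node-independent. For $y$ and a node $j \in SN(y)$, the transactions $y$ sees at $j$ are the committed $x$ with $c_x^j < b_y^j$; I would prove this set equals $\{x : j \in SN(x),\ c_x < b_y\}$, so it depends only on the global order. The forward inclusion is Rule 3 ($c_x^j < b_y^j \Rightarrow c_x < b_y$) and the reverse is Rule 7 ($c_x < b_y \Rightarrow c_x^j < b_y^j$ on every shared node). The dual fact --- if $y$ fails to see $x$ at one shared node it fails at all of them --- is the contrapositive, using Rules 2 and 6, which govern the complementary relation $b_x < c_y$; at each node local SI ensures that for committed $x$ exactly one of ``$y$ sees $x$'' or ``$y$ does not see $x$'' holds, so the two equivalences settle it uniformly. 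Rules 4 and 8 do the same bookkeeping for the begin--begin order of two transactions, which must also be uniform; this is where the nonstrict ($\le$) treatment of begins --- begin timestamps are plain reads and can tie --- has to be tracked carefully.

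With the snapshots pinned down, the rest is short. Within a node local SI already enforces first-committer-wins; two transactions writing the same record necessarily share the node that hosts it, and on that node first-committer-wins together with the agreement of the local commit order with $<$ means the same transaction wins under the merged order, so no new conflict appears. I then interleave the local schedules consistently with $<$ (possible because $<$ restricted to each node is that node's commit order) to form $H$, and conclude from the two equivalences that every read in $H$ is from the $<$-consistent snapshot. Hence $H$ is a correct DSI schedule, and since $H$ is read/write-equivalent to the actual execution, the execution is correct DSI. This follows the structure of the original DSI proof~\cite{IncrementalDSI}, reinterpreting ``node'' as ``engine.''

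The main obstacle I expect is the case analysis in the second step: for a pair $(x,y)$ at a shared node one enumerates the possible orderings of $b_x^j, c_x^j, b_y^j, c_y^j$, invokes exactly the matching rule (2 versus 3, 6 versus 7) to move between local and global orders, and simultaneously keeps the strict/nonstrict distinction for begins exactly right --- a slip there is precisely what lets a transaction see an inconsistent snapshot. A secondary subtlety is completing $<$ on transactions that share no node without creating a cycle once the per-node constraints are in place, which again rests on the acyclicity guaranteed by Rules 1 and 5.
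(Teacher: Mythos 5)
The paper does not actually prove this theorem: it is imported wholesale from Binnig et al.'s work on distributed snapshot isolation, and Section 4.8 explicitly states ``We omit the detailed proof (available elsewhere)''. The paper's own proof effort is spent entirely on the companion claim (Theorem 2) that \skeena's Algorithms 1 and 2 \emph{enforce} Rules 1--8, not on showing that Rules 1--8 \emph{suffice} for correct DSI. So your proposal is not comparable to anything in the paper; it is a reconstruction of the omitted external argument. As such a reconstruction it is structurally sound and matches the standard shape of the original: linearize the commit events using the agreement forced by Rules 1 and 5, show that the read-from set of each transaction is node-independent via the equivalences $c_x^j < b_y^j \Leftrightarrow c_x < b_y$ (Rules 3 and 7) and $b_x^j < c_y^j \Leftrightarrow b_x < c_y$ (Rules 2 and 6), settle begin--begin ties with Rules 4 and 8, and observe that first-committer-wins is already a per-node obligation because conflicting writers share the node hosting the record. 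The two points you flag as obstacles are the right ones to worry about: the acyclicity argument must be run on the \emph{combined} constraint graph over all begin and commit events (not just commit--commit edges), though this follows because Rules 1--8 jointly assert the existence of a single global partial order that every local schedule refines; and the nonstrict $\le$ for begins (reads of a counter can tie) is exactly where a careless strict/nonstrict swap would admit the skewed-snapshot anomaly the theorem is meant to exclude. If you want your write-up to serve the paper's purpose, the more useful exercise is the one the paper actually undertakes in Theorem 2 --- in particular, scrutinizing whether Algorithms 1 and 2 really discharge Rule 2, which the paper's own proof of Theorem 2 never mentions explicitly.
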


Analogous to DSI, in a shared-memory environment, Rules 1--4 construct the partial order of begin/commit events of \textit{cross-engine} transactions (``distributed transactions'' in DSI). 
Then, Rules 5--8 enforce the same partial order across all the \textit{engines} (``nodes'' in DSI). 
We omit the detailed proof (available elsewhere~\cite{IncrementalDSI}); our goal is to show the conditions enforced by \skeena satisfy Rules 1--8.

\begin{theorem}\label{the:skeena}
\skeena enforces the same partial order of cross-engine transactions on all engines as defined by Theorem~\ref{the:dsi} in a shared-memory environment. 
\end{theorem}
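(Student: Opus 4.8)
The plan is to prove Theorem~\ref{the:skeena} by exhibiting, for an arbitrary \skeena execution, one global order $<$ on the begin/commit events of cross-engine transactions and verifying that \skeena's mechanisms make all eight rules of Theorem~\ref{the:dsi} hold with respect to it. The natural choice of $<$ is the order induced by the \textbf{anchor engine}: since every transaction (even a nominally single-engine one on the slow engine) first draws its begin timestamp from the anchor engine's monotone counter, and \skeena runs the commit check of Algorithm~\ref{alg:commit-check} on the anchor engine's commit timestamp, each cross-engine transaction $x$ has well-defined anchor-side events $b_x^a, c_x^a$ totally ordered by the counter. Take $b_x < b_y \iff b_x^a \le b_y^a$ and $c_x < c_y \iff c_x^a < c_y^a$ (mixed begin/commit cases analogously, with ties on begin operations permitted exactly as the definition of $<$ allows). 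With this definition, Rules 1--4 instantiated at the anchor engine ($i=a$) and Rules 5--8 instantiated at $j=a$ are immediate from how $<$ was defined; the entire content of the theorem is that the \emph{non-anchor} engine's local order of the begin/commit events of cross-engine transactions coincides with the anchor order.

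The heart of the argument is a \textbf{monotonicity invariant} on the CSR: at every point in an execution, the set of pairs $(\text{anchor ts}, \text{non-anchor ts})$ recorded in the CSR --- whether written by Algorithm~\ref{alg:get-readview} as a begin-snapshot pair or by Algorithm~\ref{alg:commit-check} as a commit-timestamp pair --- is \emph{order-consistent}: whenever $(k_1,v_1)$ and $(k_2,v_2)$ both appear and $k_1 < k_2$, we have $v_1 \le v_2$. I would prove this by induction over the sequence of CSR modifications. Both algorithms choose the non-anchor coordinate of a new pair to lie within the interval bounded by the non-anchor coordinates of the entries neighbouring the anchor key in the CSR --- the commit check does so explicitly via \texttt{low} and \texttt{high}, and snapshot selection does so by taking the value from the adjacent entry (or a fresh latest timestamp when no neighbour exists) --- and that interval is, by the induction hypothesis, exactly the set of non-anchor values that preserve order-consistency; a transaction aborts precisely when the interval is empty. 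Concurrency does not disturb the induction: each index is modified under its mutex and the index array under a reader--writer lock, so the scan-then-map sequences of both algorithms linearize; the only wrinkle is that a thread whose target index has already been sealed (made read-only) aborts rather than inserting, which merely removes pairs from consideration.

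Given the invariant, I would obtain Rules 5--8 for the non-anchor engine $j$ directly: order-consistency makes the assignment of each cross-engine transaction to its (anchor, non-anchor) event timestamps an order-preserving bijection between the two engines' event timestamps of cross-engine transactions, so $c_x^a < c_y^a \Rightarrow c_x^{j} < c_y^{j}$ (Rule 5), and the mixed begin/commit cases (Rules 6--8) follow because begin pairs and commit pairs are order-consistent \emph{with each other}, not merely within their own kind --- this is exactly the geometric picture behind Figure~\ref{fig:arch} and Figure~\ref{fig:anomalies}, where a cross-engine transaction's non-anchor snapshot is squeezed between the non-anchor commit timestamps of its anchor-order neighbours. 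Rules 1--4 for the non-anchor engine are the converse implications and follow from the same bijection (uniqueness of commit timestamps within each engine turns $\le$ into $<$). Finally I would argue that the events named in the rules really are atomic points of the schedule: the extended pipelined commit protocol keeps a cross-engine transaction's effects invisible until \emph{both} sub-transactions are post-committed and their log records persisted, and single-engine and read-only transactions also ride the commit queue, so no transaction can observe a state in which only one sub-transaction of $x$ has committed --- this rules out the isolation-failure anomaly of Figure~\ref{fig:anomalies}(b) and makes $c_x$ a single well-defined event relative to every other transaction's $b_y$.

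I expect the main obstacle to be the \textbf{mixed begin/commit case} of the monotonicity invariant: showing a begin-snapshot pair inserted by Algorithm~\ref{alg:get-readview} is order-consistent with commit-timestamp pairs inserted by Algorithm~\ref{alg:commit-check} and vice versa, including the corner cases where the relevant CSR neighbour is absent (so a bound is $\pm\infty$ and a fresh latest timestamp is used) or where a cross-engine transaction reads data produced by a single-engine transaction --- recall single-engine commits are deliberately not entered into the CSR, so the invariant must remain stable across such ``gaps.'' A secondary obstacle is making the concurrency and maintenance arguments airtight: justifying that the per-index mutex plus index-array reader--writer lock truly linearize the scan-then-map sequences, and that whole-index recycling (which only deletes pairs no live transaction still needs) never breaks order-consistency nor the well-definedness of $<$ for the transactions that remain.
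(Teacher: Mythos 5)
Your proposal is correct and follows essentially the same route as the paper's (much terser) proof: the paper likewise argues that Algorithm~\ref{alg:get-readview} enforces Rules 4 and 8, that Algorithm~\ref{alg:commit-check} enforces Rules 1 and 5 by rejecting any pair that would break the CSR's order-consistency, and that the mixed begin/commit rules --- exactly the ``main obstacle'' you flag --- follow from the observation that begin timestamps are themselves previously committed transactions' commit timestamps, so begin pairs and commit pairs are automatically order-consistent with one another. Your explicit monotonicity invariant with induction over CSR modifications, the anchor-induced global order, and the attention to concurrency, index recycling, and pipelined-commit atomicity are a more rigorous packaging of what the paper dispatches in four sentences without addressing those points inside the proof.
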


\begin{proof}
In Algorithm~\ref{alg:get-readview}, given some $b_x^i$, \skeena either uses the latest timestamp (line 6) or the latest snapshot from engine $j$ whose corresponding snapshot in $i$ is no newer than $b_x^i$. 
Therefore, Algorithm~\ref{alg:get-readview} enforces Rules 4 and 8 of Theorem~\ref{the:dsi}. 
Similarly, Algorithm~\ref{alg:commit-check} forbids $<c_x^i, c_y^j>$ to be inserted into CSR if $c_y^j$ does not maintain the same partial order on engine $i$ (lines 14--15), satisfying Rules 1 and 5. 
Note that begin timestamps are in fact previously committed transactions' commit timestamps, enforcing Rules 3, 4, 6 and 7. 
\end{proof}

For serializability, \skeena requires the use of concurrency control protocols to follow the requirements of CO~\cite{CO}, without introducing any additional algorithms. 
Hence, \skeena can support serializability using commit ordering. 

\subsection{Discussions}
\label{subsec:discuss}
In essence, \skeena is a coordinator that enforces correct snapshots and atomic commit in fast-slow systems. 
Table~\ref{tbl:isolation} lists the requirements to achieve different overall isolation levels. 
To achieve an overall isolation level of read committed (RC), in most systems this means to acquire a new snapshot per record access. 
Guaranteeing SI or serializable isolation levels usually requires obtaining a snapshot upon transaction start or the first record access. 
Since \skeena does not implement extra concurrency control logic to avoid tight coupling with engines, for all isolation levels (e.g., SI), each engine needs to run at least at it (e.g., SI) or higher to ensure sub-transactions are correctly scheduled.
Thus, the overall isolation level guaranteed by \skeena is at most the lower level being used across all engines. 
For example, if two engines respectively use RC and SI, then \skeena can only guarantee RC overall. 

\begin{table}[t]
\caption{\textmd{Requirements to achieve different isolation levels.}}
\label{tbl:isolation}
\begin{tabular}{@{}p{2.4cm}p{5.7cm}@{}}
\toprule
\textbf{Isolation Level} & \textbf{Requirements}  \\ \midrule
Read Committed & \textit{Engines:} Isolation level $\ge$ Read Committed\\
& \textit{\skeena}: Refresh snapshot per record access\\ \midrule
Snapshot Isolation& \textit{Engines:} Isolation level $\ge$ Snapshot Isolation\\
& \textit{\skeena}: New snapshot upon start/first access\\\midrule
{Serializable} & \textit{Engines:}  Serializable, no anti-dependencies\\
& \textit{\skeena}: New snapshot upon start/first access\\
\bottomrule
\end{tabular}
\end{table}

Our focus has been on dual-engine systems with interpreted queries. 
To support more engines, a straightforward way is to extend the multi-index CSR to become a hierarchy of indexes, each is the anchor of the next lower-level engine. 
This would allow \skeena to enforce ordering between sub-transactions. 
The downside is more complex CSR maintenance which may require more efficient CSR designs. 
If more changes are tolerable, one may introduce additional global ordering to simplify sub-transaction ordering. 
However, this can potentially couple the design with engine internals. 
Most memory-optimized engines compile queries to machine code~\cite{Hekaton,PermutableCompiledQueries,HyperCompile}. 
\skeena is orthogonal to whether queries are compiled or interpreted, although we focus on the latter as accesses in conventional engines can cancel out compilation's benefits. 

Finally, \skeena can be applied to systems that (1) support multiple engines and (2) follow the database model in Section~\ref{subsec:model}.
Both are widely available in practice. 
For example, MySQL and SQL Server already support multiple engines. 
PostgreSQL can support additional engines using foreign data wrapper. 
Many systems, including MySQL, PostgreSQL and SQL Server, employ multi-versioning. 
We discuss in depth how \skeena can be used by MySQL later; for space limitation we do not expand on other systems. 
Moreover, \skeena does not require significant engine-level changes. 
The most notable (yet simple) change (mainly for conventional engines) is exposing commit ordering via a pre-commit interface. 
\skeena only expects the commit/abort decision of sub-transactions, 
without dictating engine internals, such as whether cascading abort is possible or how writes and versions are organized.  
The remaining effort is mainly put into integrating \skeena with existing multi-engine support. 
As Section~\ref{sec:mysql} describes, these changes are not intrusive or complex.

\section{Skeena in Practice}
\label{sec:mysql}
We explore the effort needed to adopt \skeena in real systems, by
enabling cross-engine transactions in open-source MySQL\footnote{Based on MySQL 8.0 at \url{https://github.com/mysql/mysql-server}.} between its default storage-centric InnoDB and ERMIA~\cite{ERMIA}.
MySQL defines a set of core interfaces (e.g., search, update and commit) for engines to implement~\cite{MySQLAltEngines}.
This allowed us to integrate ERMIA easily with $<2000$ LoC.\footnote{Details in \texttt{ha\_ermia.cc} in our code repository (\url{https://github.com/sfu-dis/skeena}).}
InnoDB and ERMIA share MySQL's SQL layer and thread pool.\footnote{Adopted from \url{https://github.com/percona/percona-server/blob/8.0/sql/threadpool.*}.}
The application specifies each table's home engine in its schema, which is managed by existing MySQL features.

To use CSR, it is necessary to understand each engine's database model.
ERMIA closely follows our database model.
To obtain a snapshot, the thread reads the counter without latching, which is much cheaper than InnoDB based on latching.
So we use ERMIA as the anchor engine.
InnoDB uses transaction IDs (TIDs) to determine record visibility and ordering.
We describe how we reconcile the differences between real implementation and our database model.
Each read-write transaction is uniquely identified by a TID drawn from a central counter.
Each record is stamped with the TID of the transaction that last updated it.
Updates are handled in-place.
Old versions are generated on-demand using undo logs.
The freshness (or the amount of undo log to apply) is determined by the transaction's read view (snapshot) which is acquired upon the first data access.
A read view consists of low/high watermarks (TIDs) and an active transactions list captured at the read view's creation time.
The transaction is not allowed to see versions created by transactions with TIDs above the high watermark, but can see the results of transactions with TIDs below the low watermark.
Versions created by transactions with TIDs between the two watermarks are invisible if they are active.
As a result, read views are not directly comparable, deviating from our database model.

Our solution is to use the high watermark in CSR.\footnote{Lines 2233--2254 of \texttt{trx0trx.cc} in our code repository.}
Specifically, sub-transactions in InnoDB first acquire the latest read view using the original approach.
We then adjust its high watermark using CSR and leave the active transactions list unchanged.
In case the new high watermark is even lower than the low watermark, we adjust both to be the same as the high watermark.
The sub-transaction can then use the adjusted read view as usual to test record visibility.

Upon commit, InnoDB assigns a \texttt{serialisation\_no} drawn from the TID counter to denote commit ordering, which we use for \skeena's commit check.\footnote{Lines 1378--1480 of \texttt{trx0trx.cc} in our code repository.}
We broke the monolithic commit function into pre/post-commit functions (Section~\ref{subsec:commit}).
The pre-commit function only acquires a \texttt{serialisation\_no}, leaving the remaining logic to post-commit.
For atomic commit, we piggyback on ERMIA's commit pipelining.
We extend the commit entry design in ERMIA to include commit LSNs in both engines, along with a MySQL callback for notifying the client of concluded transactions.\footnote{Lines 180--222 of \texttt{sm-log-alloc.cpp} in our code repository.}

In total, we modified 83 LoC in InnoDB for it to use \skeena to choose read views and commit sub-transactions.
CSR is implemented as a separate module of $\sim$600LoC.\footnote{Details in \texttt{gtt.\{cc,h\}} in our code repository.}
For ERMIA, we only modified its commit pipelining code to consider both engines.

\section{\skeena in Action}
\label{sec:eval}
We empirically evaluate \skeena under microbenchmarks and realistic workloads. 
Through experiments, we show that: 

\begin{itemize}[leftmargin=*]\setlength\itemsep{0em}
\item \skeena retains the performance benefits brought by memory-optimized engines in fast-slow systems; 
\item \skeena only incurs a very small amount of overhead for cross-engine transactions; 
\item By judiciously placing tables in different engines, \skeena can effectively improve performance for realistic workloads.
\end{itemize}

\subsection{Experimental Setup}
We run experiments on a dual-socket server equipped with two 20-core Intel Xeon Gold 6242R CPUs (80 hyperthreads in total), 384GB of main memory \edit{and a 400GB Micron SSD with peak bandwidth of 760MB/s.} Each CPU has 35.75MB of cache and is clocked at 3.1GHz.  
All experiments are conducted in MySQL 8.0 with InnoDB and ERMIA. 
We use SysBench~\cite{SysBench} to issue benchmarks. 
To reduce networking overhead, we pin MySQL server and the client (SysBench) to two different CPU sockets, and  
use a Unix Domain Socket between the server and client~\cite{MySQLTCPSocket}. 
We use jemalloc~\cite{jemalloc} to avoid memory management becoming a major bottleneck. 
We use SI (repeatable read in InnoDB) to run all experiments as they are widely used in practice, and reinitialize the database for each run which then starts with a warm buffer pool. 
We report the average throughput and latency of three 60-second runs. 

ERMIA is memory-optimized so all records are in heap memory. 
For InnoDB, we test both the memory- and storage-resident cases: 
the memory-resident variant (\texttt{InnoDB-M}) uses a large enough buffer pool to avoid accessing storage; 
the storage-resident variant (\texttt{InnoDB}) uses a small buffer pool that would mandate accessing the storage stack. 
\edit{To stress test \skeena, we store persistent data (such as data files and logs) in \tmpfs, so that I/O is as fast as memory, making it easier to expose \skeena's overhead. 
To understand the performance under more realistic workloads, we also run experiments using a real SSD; 
\tmpfs is used unless otherwise specified.}

\subsection{Benchmarks}
We use YCSB-like~\cite{YCSB} microbenchmarks and TPC-C~\cite{TPCC} (based on Percona's implementation~\cite{PerconaTPCC}) to test \skeena and explore the effect of cross-engine transactions. 

\textbf{Microbenchmarks.} 
We devise three microbenchmarks based on access patterns: read-only, read-write and write-only. 
Unless otherwise specified, each transaction accesses ten records randomly chosen from a set of tables following a uniform distribution: 
(1) for read-write transactions, eight out of the ten accesses are point reads and two are updates, and 
(2) for each engine, we create 250 tables, each of which contains a certain number of records depending on whether the experiment is memory- or storage-resident for InnoDB.
Each record is 232-byte, consisting of two \texttt{INTEGER}s and one \texttt{VARCHAR}. 
For memory-resident experiments, each table contains 25000 records, bringing the total data size of 250 tables to $\sim$1.35GB; the buffer pool size in InnoDB is set to 32GB. 
For storage-resident experiments, we set each table to contain 250000 records, and the total data size is $\sim$13.5GB; we set the buffer pool to be 2GB. 
Under both settings, ERMIA is populated with the same amount of data as InnoDB (i.e., 500 tables across two engines).

\begin{table}[t]
\centering
\caption{\textmd{Throughput (TPS) of single-engine microbenchmarks (80 connections) and TPC-C (50 connections). 
\skeena (-S) incurs negligible overhead and retains ERMIA's high performance.}}
\label{tbl:single-engine}
\begin{tabular}{@{}lllll@{}}
\toprule
\bf Scheme    & \bf Read-only & \bf Read-write & \bf Write-only & \bf TPC-C   \\ \midrule
\ermia & 1,427,071 & 1,252,146 & 1,091,606 & 7,550 \\
\ermias & 1,430,137 & 1,253,368 & 1,095,056 & 7,546 \\ \hline
\innodbm & 1,326,710 & 930,249 & 710,697 & 626 \\ 
\innodbms & 1,310,809 & 915,406 & 711,425 & 612 \\ \hline
\innodb & 456,672 & 420,328 & 194,446 & 277 \\
\innodbs & 453,781 & 420,474 & 194,412 & 261 \\ \bottomrule
\end{tabular} 
\end{table}

\textbf{TPC-C}. 
We use TPC-C for the dual-purpose of (1) testing \skeena under non-trivial transactions, and (2) exploring the potential benefits of cross-engine transactions in realistic scenarios. 
We run both memory- and storage-resident experiments: the former sets the scale factor to be the number of connections and the latter uses 200 warehouses. 
For memory-resident experiments, each connection works on a different home warehouse, but the 1\% of New-Order and 15\% of Payment transactions may respectively access a remote warehouse; we set InnoDB buffer pool to be 32GB which is large enough to hold all the data ($\sim$14GB). 
With 200 warehouses for storage-resident experiments, the total data size is $\sim$55GB, for which we set InnoDB to use a buffer pool of 5GB and set each thread (connection) to always pick a random warehouse as its home warehouse to ensure the footprint covers the entire database.
Finally, we gradually move tables from InnoDB to ERMIA, making the affected transactions cross-engine. 
This allows us to explore the effectiveness of cross-engine transactions and distill several useful suggestions on how to optimize performance in fast-slow systems;
we discuss more detailed setups later.

\subsection{Single-Engine Performance}
\label{subsec:single-perf}
An important goal of \skeena is to ensure single-engine transactions (especially those in the faster engine, ERMIA) pay little additional cost. 
We evaluate this aspect by turning \skeena on and off under six ERMIA- and InnoDB-only variants. 
\edit{To stress test \skeena, we use the memory-resident InnoDB (\texttt{InnoDB-M}) and the storage-resident InnoDB with \tmpfs (\texttt{InnoDB}).} 
Table~\ref{tbl:single-engine} summarizes the results; variants with \skeena turned on carry an \texttt{S} suffix. 
In all cases, \skeena incurs negligible overhead with the slightly more complex logic in commit pipelining.
Note that ``single-engine'' transactions in \innodb/\innodbm are in fact cross-engine, as they must follow the start order in the anchor engine (ERMIA) even if they do not access any records in ERMIA. 
This means CSR will only maintain a single mapping (using ERMIA's initial snapshot) which incurs a constant but very small amount of overhead (up to 5.6\%); garbage collection is also never needed with a single mapping.  
Compared to \innodb, \innodbm performs up to over $\sim3\times$ better thanks to its large buffer pool. 
\innodbm and \innodbms perform similarly to ERMIA under the read-only microbenchmark, but fall behind as we add more writes to the workload, signifying the potential benefits a memory-optimized engine could bring (more later). 
\ermias performs as well as \ermia since CSR is never used. 
These results verify that \skeena retains the advantage of memory-optimized engines. 

\subsection{Cross-Engine Performance}
\label{subsec:cross-perf}
Now we explore the behavior of cross-engine transactions using microbenchmarks. 
For each transaction, we vary the percentage of InnoDB and ERMIA accesses out of ten accesses. 
For example, with \texttt{30\% InnoDB}, three accesses per transaction are done in InnoDB, the remaining seven accesses go to ERMIA.
We use the same \texttt{-M} and \texttt{-S} notations from Section~\ref{subsec:single-perf} for single-engine transactions whose results are shown to calibrate expectations. 
For cross-engine transactions we mark the percentage of InnoDB accesses and note whether the experiment is memory- or storage-resident as needed.

\begin{figure}[t]
\centering
\includegraphics[width=\columnwidth]{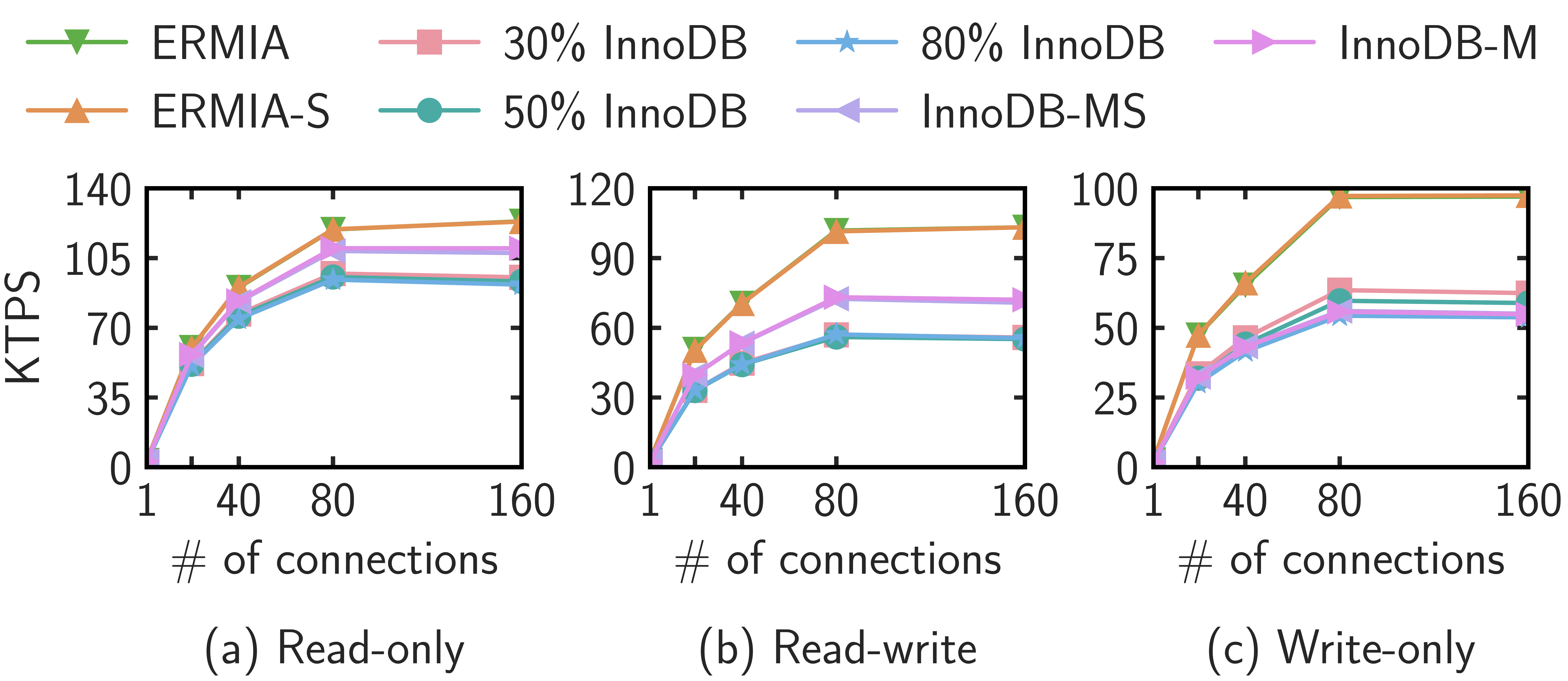}
\caption{\textmd{Throughput under memory-resident microbenchmarks. 
CSR cost can be comparable to that of reading records, causing \innodbm to outperform cross-engine cases.}}
\label{fig:microbenchmark-memory}
\end{figure}

\begin{figure}[t]
\centering
\includegraphics[width=\columnwidth]{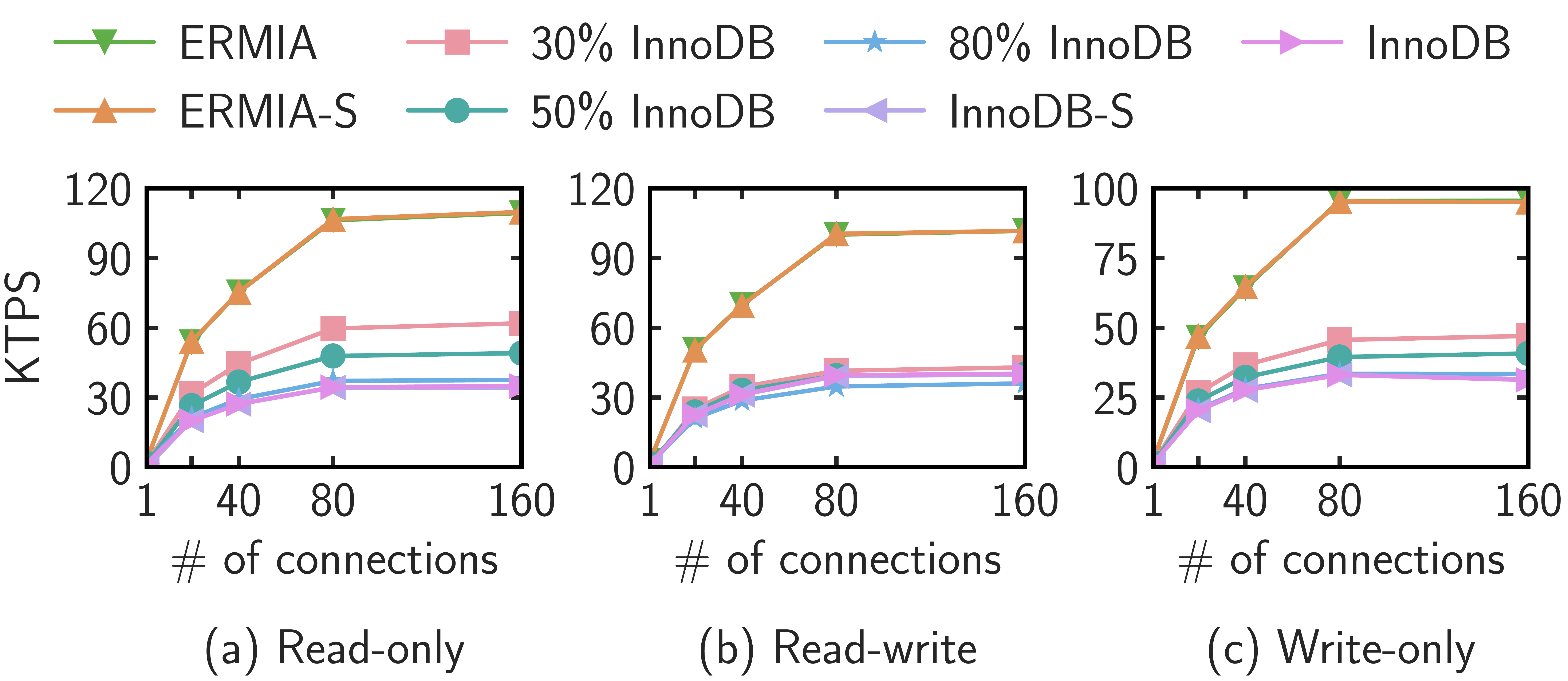}
\caption{\textmd{Throughput under storage-resident microbenchmarks.
Performance improves with more accesses in ERMIA.}}
\label{fig:microbenchmark-storage}
\end{figure}

InnoDB is more heavyweight, so more accesses in it should lower performance, e.g., transactions with 30\% InnoDB accesses should perform better than those that only access InnoDB. 
However, Figures~\ref{fig:microbenchmark-memory}(a)--(b) show the opposite: 
\innodbm outperforms the cross-engine \texttt{30--80\% InnoDB}. 
The reason is two-fold. 
First, ERMIA writes a commit log record for read-only transactions. 
So with more ERMIA accesses, CSR becomes larger and slower to access. 
This is non-negligible for read-intensive workloads, which are very lightweight in ERMIA. 
Second, under \innodbms, CSR is very small and only maintains one mapping as we mentioned earlier. 
However, under \texttt{30--80\% InnoDB}, more ERMIA accesses lead to more mappings in CSR, which then becomes more expensive to query. 
The memory-resident write-only workload follows the expectation in Figure~\ref{fig:microbenchmark-memory}(c), although the difference is not significant due to InnoDB's low raw performance. 
As the workload becomes storage-resident, \skeena's overhead becomes negligible, with more ERMIA accesses leading to higher performance: 
in Figure~\ref{fig:microbenchmark-storage}, \texttt{30\% InnoDB} is up to 75\%/40\% faster than \innodb for read-only/write-only workloads.

\begin{figure}[t]
	\includegraphics[width=\columnwidth]{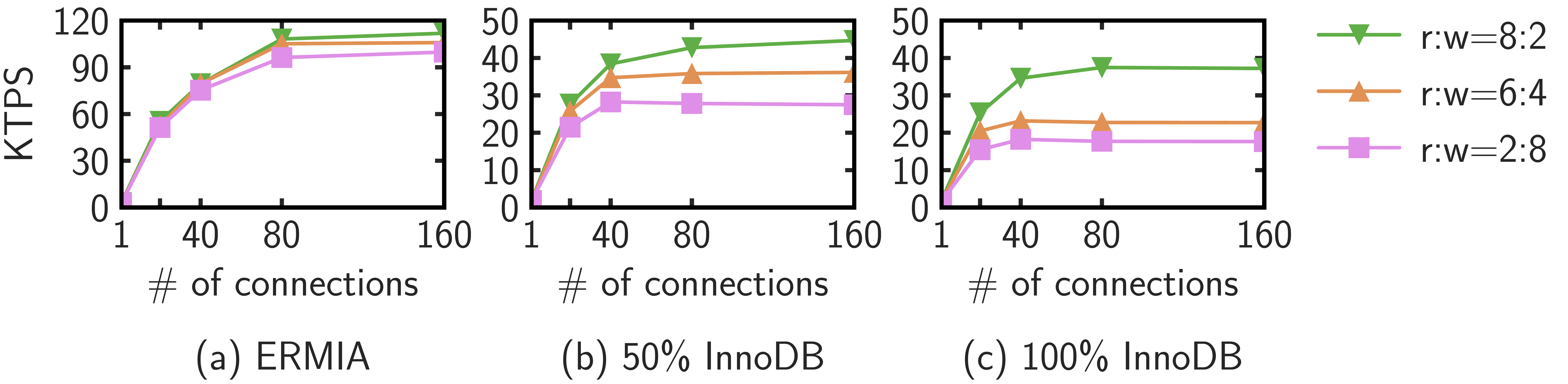}
	\caption{\edit{Throughput under storage-resident microbenchmarks with different read/write ratios.}}
	\label{fig:rw}
\end{figure}

\begin{figure}[t]
	\includegraphics[width=\columnwidth]{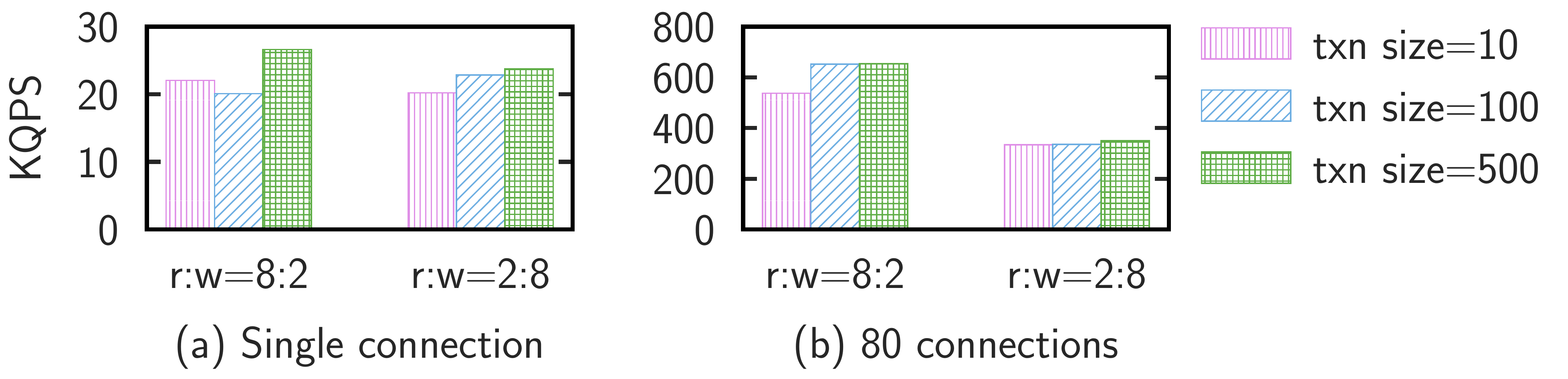}
	\caption{\edit{Throughput under storage-resident microbenchmarks with 50\% InnoDB accesses and different transaction sizes.}}
	\label{fig:tx-size}
\end{figure}

\subsection{Impact of Transaction Size and Mix}
We quantify the impact of transaction sizes along with the impact of read/write mixes. 
We first show the throughput of the storage-resident workload with 50\% InnoDB accesses under varying transaction sizes and read/write ratios. 
In Figure~\ref{fig:tx-size}, longer transactions can lower performance, but do not delay CSR index recycling as all the threads start and commit transactions at roughly the same pace. 
We also ran experiments with a mix of long and short transactions, where a fixed 0--20\% of connections only run long transactions with 500 queries.
We observe the number of indexes increases by $\sim$50 per second (with 1000 entries per index). 
Once the number of indexes reaches a configurable threshold (1000 in our experiments), recycling kicks in and works well across varying percentages of long-running transactions.
Neither long-running transactions nor capacity and threshold settings affect QPS in a noticeable way. 
Figure~\ref{fig:rw} shows the throughput under different read/write ratios. 
With more writes, throughput drops as handling writes is more complex. 
The impact on ERMIA is very small in Figure~\ref{fig:rw}(a). 
When the read ratio drops from 80\% to 60\%, under \texttt{100\% InnoDB}, performance can drop by up to $\sim$30\%. 
Cross-engine transactions in Figure~\ref{fig:rw}(b) with 50\% InnoDB accesses still have the advantage over \texttt{100\% InnoDB}. 

We use short transactions to explore the overhead of \skeena and transaction management. 
Each transaction issues two queries (one per engine for cross-engine cases). 
In Figure~\ref{fig:2ops}, 
when the transaction only accesses ERMIA, the performance remains similar across all workloads, as all data is in memory and the relative speed difference for read/write in ERMIA is small. 
For \texttt{100\% InnoDB}, with writes, the performance drops by up to $\sim$25\%. 
The cross-engine \texttt{50\% InnoDB} exhibits the lowest performance due to extra time spent on CSR and the commit protocol. 
However, it is only slightly slower than \texttt{100\% InnoDB} as it is more heavyweight to process writes in InnoDB than accessing CSR which is purely in-memory. 

\subsection{Skewed Accesses}
We test the storage-resident workload with different ERMIA/InnoDB accesses mixes with 80\% of read and 20\% of write per transaction. 
As Figure \ref{fig:skewness} shows, throughput remains similar across different skewness for ERMIA-only cases. 
The result seems counter-intuitive: a skewed workload leads to a smaller footprint which should change performance.  
The smaller footprint may lead to higher contention on locks/latches (lower performance).  
It could also give better buffer pool and CPU cache locality (higher performance). 
Our profiling results show that the actual CPU time spent on ERMIA is less than 5\%.  
The remaining $>95\%$ of CPU time is taken by MySQL's SQL and networking layers, overshadowing the effect brought by the smaller footprint. 
As we add more accesses to InnoDB, 50\% and 100\% InnoDB show in general higher performance but not by a lot. 
The main reason is the storage stack's overhead starts to dominate once we access InnoDB tables. 
Skewness (smaller footprint) therefore does not show an obvious impact. 

\begin{figure}[t]
	\includegraphics[width=0.9\columnwidth]{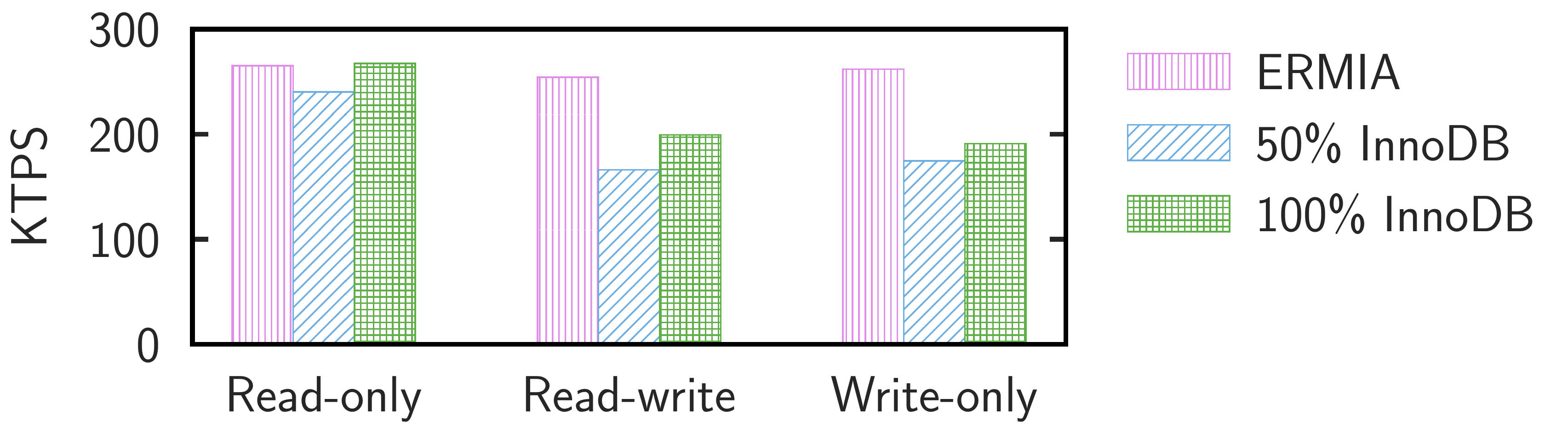}
	\caption{\edit{Throughput under memory-resident benchmarks with short transactions (two queries) under 80 connections.}}
	\label{fig:2ops}
\end{figure}

\begin{figure}[t]
	\includegraphics[width=\columnwidth]{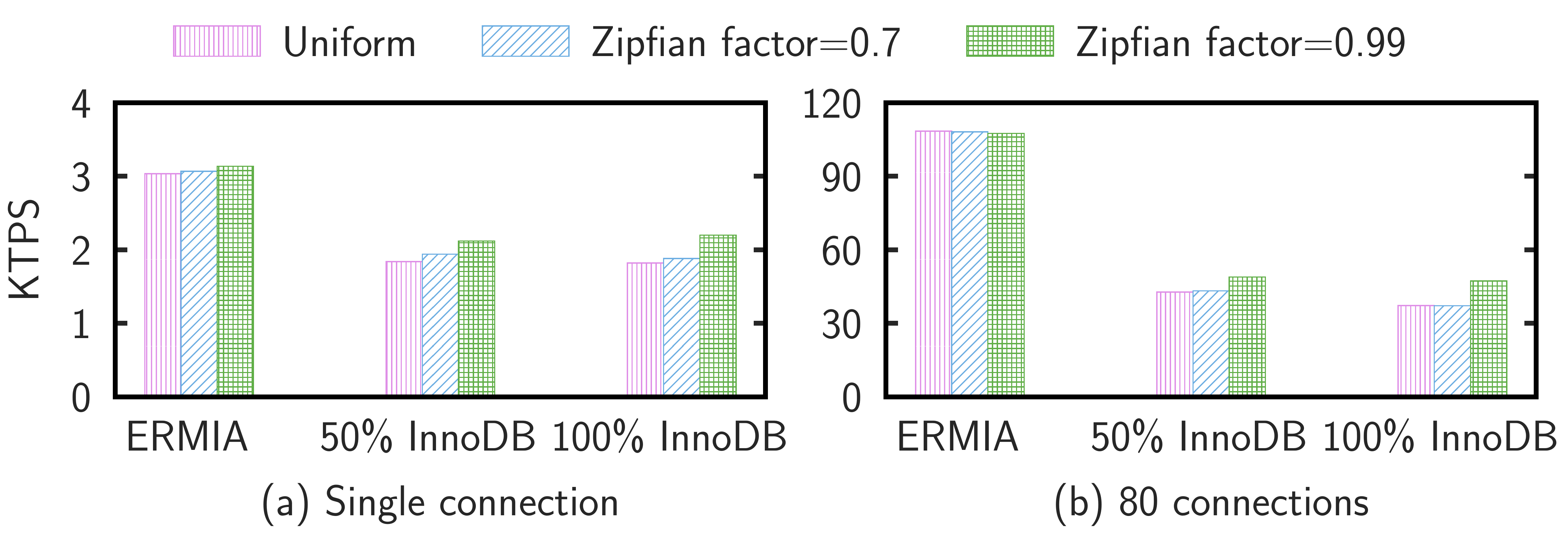}
  \caption{\edit{Throughput of cross-engine read-write transactions under storage-resident microbenchmarks and varying skewness.}}
	\label{fig:skewness}
\end{figure}

\begin{table}[t]
	\centering
  \caption{\edit{Throughput (TPS) of a storage-resident cross-engine workload with 50\% InnoDB accesses under varying buffer pool hit ratios.}} 
	\label{tbl:ssd}
	\begin{tabular}{@{}lllll@{}}
		\toprule
		\bf Number of Connections & \bf 100\% & \bf 99\% & \bf 90\% & \bf 70\%   \\ \midrule
		\bf 1 & 1,973 & 1,972 & 1,939 & 1,866 \\
		\bf 80 & 36,749 & 36,186 & 35,414 & 28,369 \\ \bottomrule
	\end{tabular}
\end{table}

\subsection{Impact of Slower Storage}
Now we run experiments in a more realistic environment that uses an actual SSD to store table and log data. 
We use the storage-resident setting and vary the buffer pool size such that the hit ratio is between 70\% and 100\%. 
In this experiment, to stress \skeena's CSR structure, each 10-record transaction accesses 5/5 records in InnoDB/ERMIA with 80\% reads and 20\% writes. 
As shown in Table~\ref{tbl:ssd}, throughput under a single connection remains stable across different hit ratios because the working set is relatively small which gives better locality. 
With more (80) connections, throughput drops as the hit ratio drops because more I/Os on the SSD are needed. 
Our profiling results (details not shown here for brevity) indicate that across all cases \skeena occupies less than 5\% of total CPU time, most of which was on accessing the engine and the SQL layer. 

\begin{figure}[t]
	\centering
	\includegraphics[width=\columnwidth]{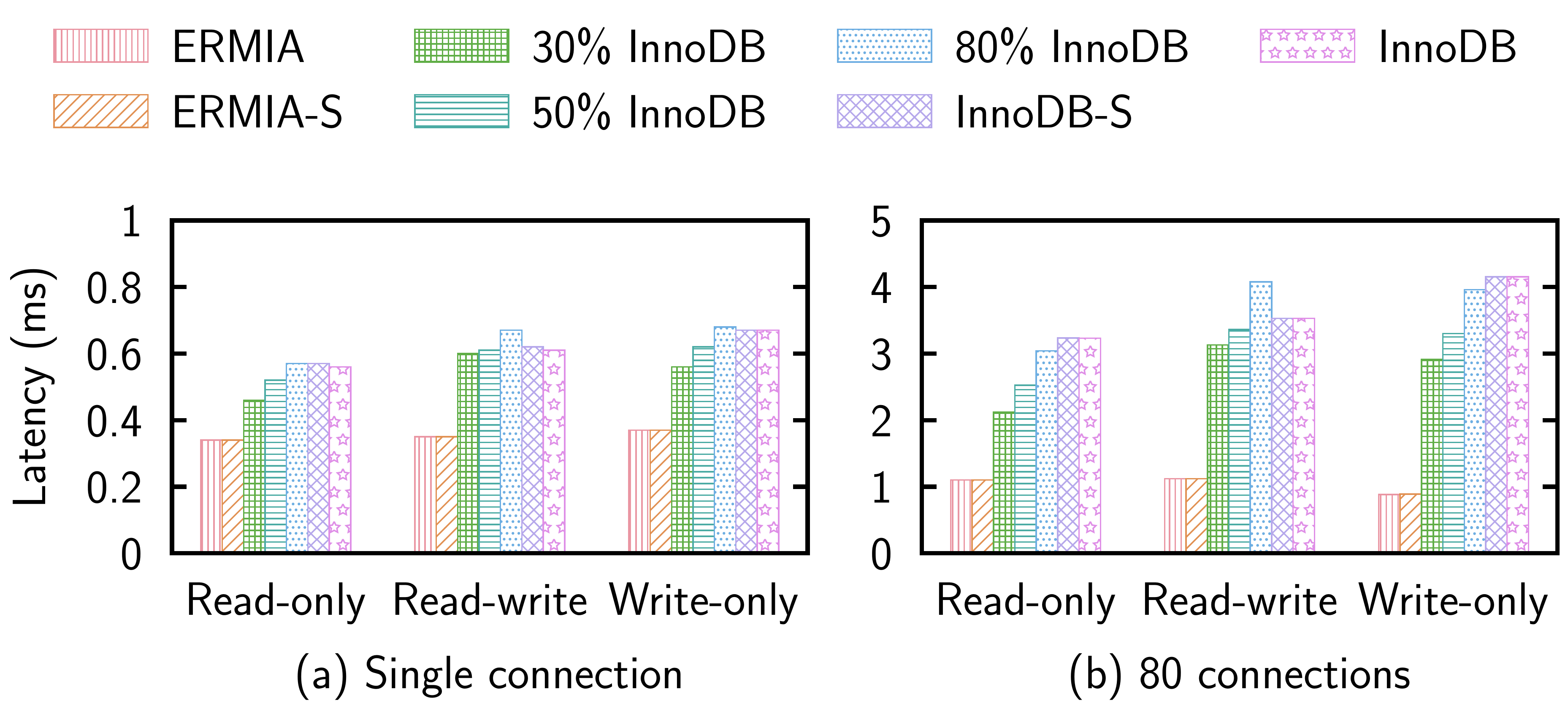}
	\caption{\textmd{95th percentile latency at a single connection and 80 connections under various storage-resident workloads.}}
	\label{fig:microbenchmark-latency}
\end{figure}

\subsection{Transaction Latency}
Our final microbenchmark explores how \skeena impacts transaction latency using the storage-resident microbenchmarks. 
We run the same microbenchmarks done in Section~\ref{subsec:cross-perf} and present results obtained at one and 80 connections where the system is almost idle and fully saturated, respectively. 
In Figure~\ref{fig:microbenchmark-latency}, at 80 connections the absolute latency unavoidably increases. 
But in all cases \skeena does not increase latency noticeably for single-engine transactions: 
\ermia and \ermias do not use CSR, while the overhead for \innodbs is a constant. 
As expected, latency increases proportionally with more accesses in InnoDB. 

\subsection{Effectiveness of Cross-Engine Transactions}
\label{subsec:cross-effect}
Realistic workloads may use cross-engine transactions to improve performance (using a main-memory engine) and/or reduce storage cost with the storage-centric engine. 
This is done by placing different tables in different engines. 
We use TPC-C to explore this aspect; we observed similar trends for memory- and storage-resident setups, so we only show the results from the storage-resident setup. 
As shown in Figure~\ref{fig:tpcc-heatmap-mix}, we start with all tables in InnoDB (bottom) and gradually move them to ERMIA (up). 
Overall, throughput improves with more tables in ERMIA, but performance does not change much until the \texttt{New-Orders} table is moved to ERMIA ($\sim$10$\times$ faster than \texttt{100\% InnoDB}).

Placing \texttt{New-Orders} in ERMIA should be the key to improving overall throughput, but it is unclear how each transaction benefits from this. 
We further run individual TPC-C transactions (instead of the full mix) under a variant that only places \texttt{New-Orders} in ERMIA (leaving the rest in InnoDB). 
Figure~\ref{fig:tpcc-heatmap-gap} compares its throughput (denoted as \texttt{+New-Orders}) to other variants, among which \texttt{++Orders} and \texttt{++New-Orders} refer to the corresponding rows in Figure~\ref{fig:tpcc-heatmap-mix}, respectively. 
In Figures~\ref{fig:tpcc-heatmap-gap}(a)--(b) and \ref{fig:tpcc-heatmap-gap}(d)--(e), placing \texttt{New-Orders} in ERMIA does not affect the New-Order, Payment, Stock-Level and Order-Status transactions. 
This is because Stock-Level, Order-Status and Payment do not access \texttt{New-Orders}, and the New-Order transaction only inserts one row into \texttt{New-Orders}. 
However, the Delivery transaction intensively accesses \texttt{New-Orders} with range scan and aggregation operations. 
In Figure~\ref{fig:tpcc-heatmap-gap}(c), placing \texttt{New-Orders} in ERMIA accelerates Delivery by $\sim$30$\times$ as InnoDB has to hold locks for records to be deleted. 
Therefore, improvement on Delivery is the main reason for the improved overall performance. 

These results show that table placement affects different transactions in different scales. 
We analyze this effect in Figure~\ref{fig:tpcc-heatmap-tx} using individual transactions at 50 connections. 
Tables are gradually moved from InnoDB to ERMIA. 
As Figure~\ref{fig:tpcc-heatmap-tx} shows, placing \texttt{Customer} in ERMIA alone allows the Payment and Order-Status transactions to perform 6--7$\times$ better, whereas the Stock-Level transaction benefits the most when the \texttt{Stock} table is placed in ERMIA. 

\begin{figure}[t]
	\centering	
	\includegraphics[width=\columnwidth]{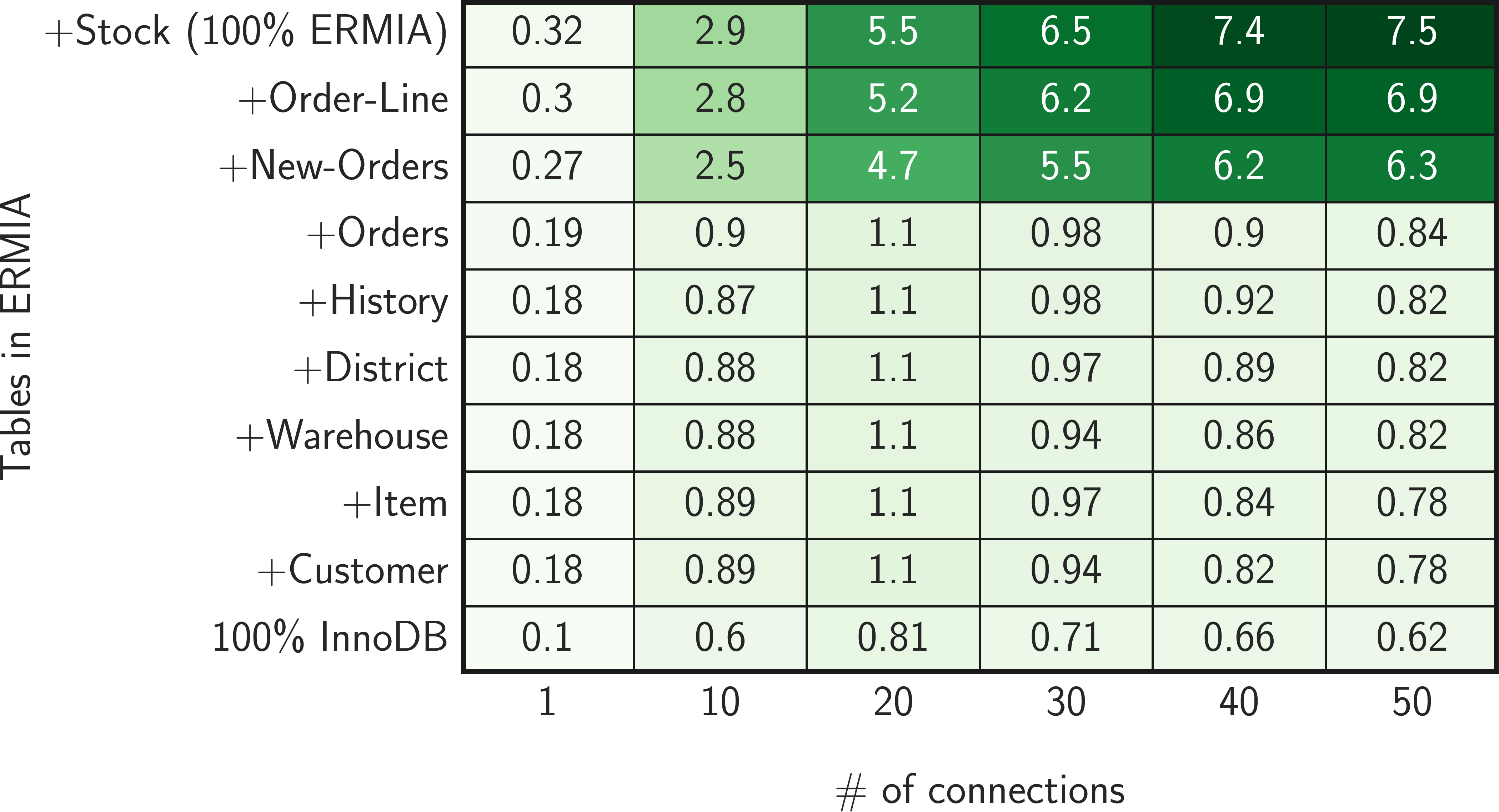}
	\caption{\textmd{TPC-C throughput (kTPS) with tables gradually placed in ERMIA (cumulatively from bottom up). 
  Placing \texttt{New-Orders} in ERMIA is the key to improve overall mix performance.}
  }
	\label{fig:tpcc-heatmap-mix}
\end{figure}

\begin{figure*}[t]
	\includegraphics[width=\textwidth]{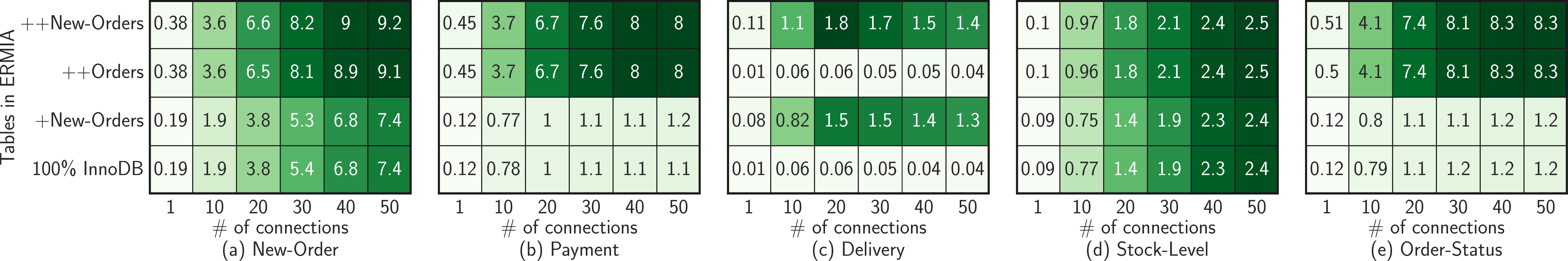}
	\caption{\textmd{Throughput (kTPS) of individual TPC-C transactions when only \texttt{New-Orders} is placed in ERMIA (\texttt{+New-Orders}) compared to \texttt{100\% InnoDB} and two other variants that place cumulatively up to \texttt{Orders} and \texttt{New-Orders} in ERMIA.}} 
	\label{fig:tpcc-heatmap-gap}
\end{figure*}

\begin{figure}[t]
	\includegraphics[width=\columnwidth]{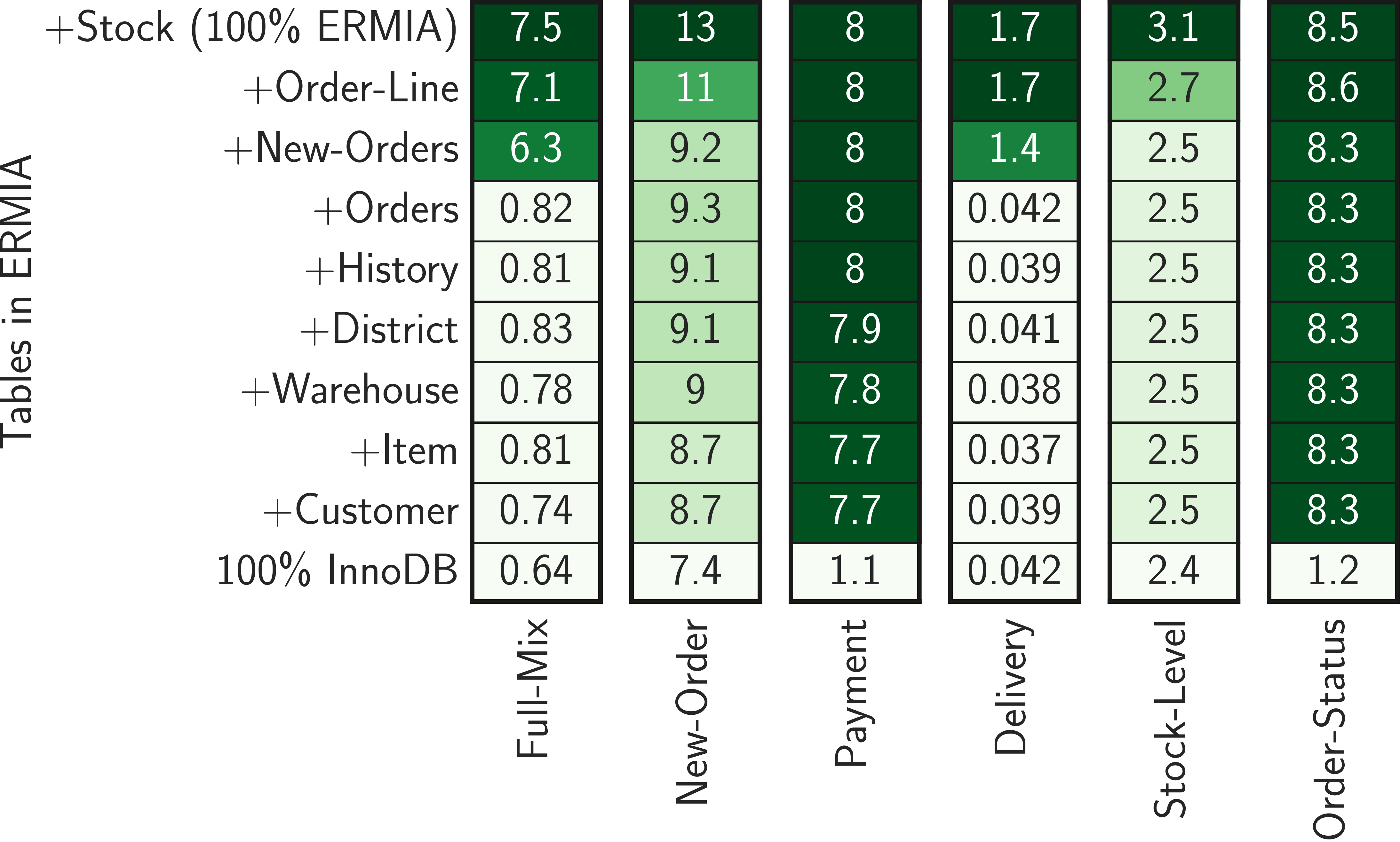}
	\caption{\textmd{Throughput (kTPS) of TPC-C mix and individual transactions by varying table placement at 50 connections.}}
	\label{fig:tpcc-heatmap-tx}
\end{figure}

\textbf{Recommended End-to-End Cross-Engine TPC-C.}
Based on the previous results, we recommend three schemes: 
\begin{itemize}[leftmargin=*]\setlength\itemsep{0em}
\item \texttt{New-Order-Opt}: The \texttt{Customer} and \texttt{Item} tables are placed in ERMIA to optimize the New-Order transaction. 
\item \texttt{Payment-Opt}: Only \texttt{Customer} is placed in ERMIA to optimize the Payment transaction, which intensively accesses \texttt{Customer}. 
\item \texttt{Archive}: All the tables except \texttt{History} are placed in InnoDB, leveraging its cheaper storage cost compared to ERMIA. 
\end{itemize}

The first two schemes aim to optimize database accesses with select use of main-memory tables, while \texttt{Archive} attempts to reduce storage cost of in-memory databases using a traditional engine.
Figure~\ref{fig:tpcc-disk-resident} shows how they compare to baselines that place all tables in ERMIA and InnoDB. 
\texttt{New-Order-Opt} and \texttt{Payment-Opt} improve the performance of the affected transactions compared to \innodb. 
Since \texttt{Archive} executes almost fully in ERMIA, its performance overlaps with \ermia because \texttt{History} is never queried and only occupies less than 600MB of space. 
In reality, such workloads can run for much longer and accumulate much more data; placing it in InnoDB can drastically reduce storage cost as main memory is much more expensive than SSDs and disks. 

\textbf{Impact on Abort Rate.}
As we discussed in Section~\ref{sec:skeena}, \skeena can cause transactions to abort if they cannot find appropriate snapshots or fail commit check. 
We compare the abort rate using the preferred end-to-end TPC-C schemes. 
The workload runs under the memory-resident setup which exhibits low contention (each transaction works on a fixed home warehouse) so that snapshot selection and commit check in \skeena are the main source of aborts. 
As baselines, single-engine \innodb and \ermia exhibit an abort rate of 0.43\% and 0.47\%, respectively. 
With \skeena, \texttt{New-Order-Opt}, \texttt{Payment-Opt} and \texttt{Archive} respectively exhibit an abort rate of 0.61\%, 0.54\% and 0.45\%. 
Across individual runs, we observed up to 0.3\% higher abort rate than single-engine cases. 
The impact on abort rate is more pronounced for the read-write microbenchmarks, where up to $\sim$5\% additional cross-engine transactions are aborted due to \skeena. 
These results corroborate with prior work~\cite{IncrementalDSI} that the impact on abort rate is very small for realistic workloads. 

\begin{figure}[t]
	\centering
	\includegraphics[width=0.9\columnwidth]{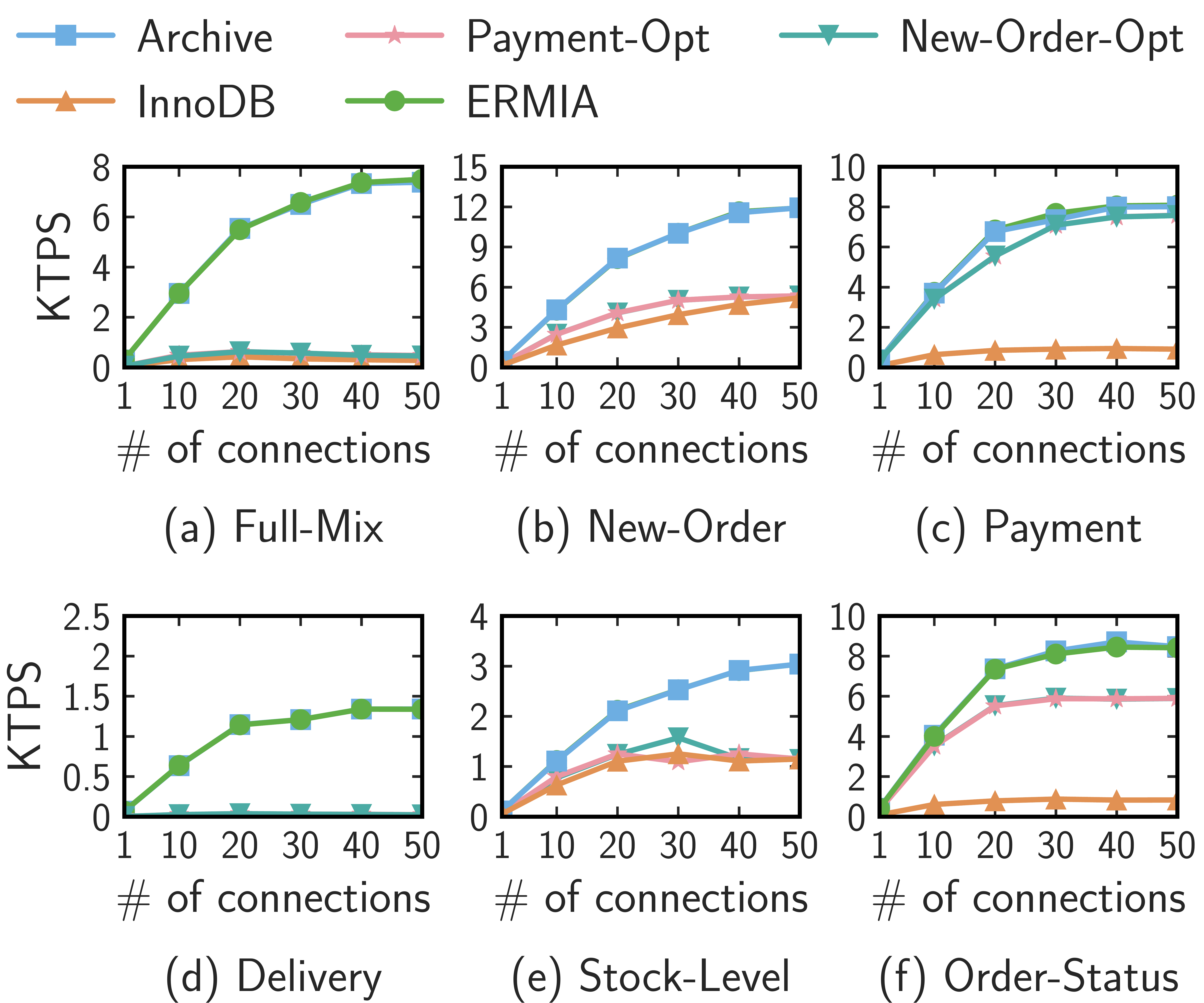}
	\caption{\textmd{TPC-C throughput under select table placement schemes that optimize for different application scenarios.}}
	\label{fig:tpcc-disk-resident}
\end{figure}

\section{Related Work}
\label{sec:related}
Our work builds upon rich literature on federated systems, distributed and replicated SI systems, and modern database engines.

\textbf{Federated and Polystore Systems.}
Guaranteeing serializability is challenging due to the full autonomy and heterogeneity of member systems. 
Georgakopoulos et al.~\cite{MDBSForcedUpdates} proposed a ticket method and Superdatabase~\cite{SuperDB} exports transaction ordering for consistent updates. 
Breitbart et al.~\cite{MDBSUpdate} proposed a protocol that detects cycles for consistency and resolving deadlocks.
Myriad~\cite{MYRIAD,MYRIADDesign} uses 2PL and 2PC for serializability and atomicity. 
Recent polystores~\cite{bigdawg,agrawal2018rheem,alotaibi2019towards} mainly focus on analytics, while our focus is OLTP. 

\textbf{DSI and Replication.}
Prior work identified the anomalies that would lead to inconsistent snapshots~\cite{FederatedSIIntegration}.  
Binnig et al.~\cite{IncrementalDSI} further identified more anomalies and proposed correctness conditions. 
\skeena is based on these results. 
Generalized SI~\cite{GeneralizedSI} allows transactions to use older snapshots in replicated databases. 
\skeena shares the similar property by selecting snapshots using CSR. 

\textbf{Modern Database Engines.}
Multicore CPUs and large DRAM have led to numerous memory-optimized engines~\cite{MMDBOverview,Cicada,Silo,FOEDUS,ERMIA,Deuteronomy,Hekaton,HStore,Hyper}. 
They feature new designs on indexing, concurrency control and logging protocols that drastically improve performance. 
Our work explores one of the possible ways to adopt them in practice. 
Some recent systems~\cite{LeanStore,Umbra} efficiently leverage modern fast SSDs to approach in-memory performance while keeping storage cost lower than main-memory systems.
It is interesting future work to compare them with cross-engine systems.

\section{Conclusion}
Cross-engine transactions can be very useful in modern fast-slow multi-engine systems, but are poorly supported with various limitations. 
This paper proposes \skeena, a holistic approach to efficient and consistent cross-engine transactions. 
\skeena consists of a cross-engine snapshot registry (CSR) that tracks snapshots and a commit protocol for multi-engine systems.
\skeena can be easily adopted by real systems, as shown by our experience with MySQL. 
Evaluation on a 40-core server shows that \skeena incurs negligible overhead and maintains the benefits of memory-optimized engines. 
\label{sec:conclusion}

\begin{acks}
We thank Chong Chen, Per-{\AA}ke Larson, Qiang Liu, Chengwei Zhang, Zongquan Zhang, Yanhui Zhong and Qingqing Zhou for their valuable discussions and comments on this project. 
We also thank the reviewers for their constructive feedback. 
This work received support from Huawei Cloud Database Innovation Lab.
\end{acks}

\balance
\bibliographystyle{ACM-Reference-Format}
\bibliography{ref}

\end{document}